\newtheorem{theorem}{Theorem}[section]
\newtheorem{lemma}[theorem]{Lemma}
\theoremstyle{definition}
\newtheorem{proposition}[theorem]{Proposition}
\theoremstyle{remark}
\numberwithin{equation}{section}
\newcommand{\abs}[1]{\lvert#1\rvert}
\newcommand{\C}{\mathbb{C}}
\newcommand{\quat}{\mathbb H}
\newcommand{\R}{\Bbb R}
\newcommand{\Z}{\mathbb Z}
\newcommand{\be}{\begin{equation}}
\newcommand{\en}{\end{equation}}
\newcommand{\bedefin}{\begin{defi}}
\newcommand{\findefi}{\end{defi} \medskip}
\newcommand{\betheo}{\begin{theorem}$\!\!${\bf \,\,\,}}
\newcommand{\entheo}{\end{theorem}}
\newcommand{\enth}{\end{theorem}}
\newcommand{\becor}{\begin{cor}$\!\!${\bf .}}
\newcommand{\encor}{\end{cor}}
\newcommand{\belem}{\begin{lem}$\!\!${\bf .}}
\newcommand{\enlem}{\end{lem}}
\newcommand{\bea}{\begin{eqnarray}}
\newcommand{\ena}{\end{eqnarray}}
\newcommand{\beano}{\begin{eqnarray*}}
\newcommand{\enano}{\end{eqnarray*}}
\newcommand{\bee}{\begin{enumerate}}
\newcommand{\ene}{\end{enumerate}}
\newcommand{\bei}{\begin{itemize}}
\newcommand{\eni}{\end{itemize}}
\newcommand{\betab}{\begin{tabular}}
\newcommand{\entab}{\end{tabular}}
\newcommand{\bd}{\begin{displaymath}}
\newcommand{\h}{{\mathfrak H}}
\newcommand{\hquat}{\mbox{\boldmath $\mathfrak K_{\mathbb H}$}}
\newcommand{\kc}{{\mathfrak K_{\mathbb C}}}
\newcommand{\bsigma}{\mbox{\boldmath $\sigma$}}
\newcommand{\bPhi}{\mbox{\boldmath $\Phi$}}
\newcommand{\bPsi}{\mbox{\boldmath $\Psi$}}
\newcommand{\bbra}{\mbox{\boldmath $($}}
\newcommand{\bket}{\mbox{\boldmath $)$}}
\newcommand{\bmid}{\mbox{\boldmath $\;\mid\;$}}
\newcommand{\bfrakf}{\mbox{\boldmath $\mathfrak f$}}
\newcommand{\bfrakq}{\mbox{\boldmath $\mathfrak q$}}
\newcommand{\bfrakx}{\mbox{\boldmath $\mathfrak x$}}
\newcommand{\bfraka}{\mbox{\boldmath $\mathfrak a$}}
\newcommand{\bfrakb}{\mbox{\boldmath $\mathfrak b$}}
\newcommand{\bfrako}{\mbox{\boldmath $\mathfrak o$}}
\newcommand{\bfrakk}{\mbox{\boldmath $\mathfrak k$}}
\newcommand{\bA}{\mathbf A}
\newcommand{\bb}{\mathbf b}
\newcommand{\bk}{\mathbf k}
\newcommand{\bq}{\mathbf q}
\newcommand{\bx}{\mathbf x}
\newcommand{\bi}{\mathbf i}
\newcommand{\bj}{\mathbf j}
\newcommand{\rh}{\mathfrak{H}_{\mathbb{R}}}
\newcommand{\ch}{\mathfrak{H}_{\mathbb{C}}}
\newcommand{\qh}{\mathfrak{H}_{\mathbb{H}}}
\newcommand{\qu}{\mathbf{q}}
\begin{document}

\title[Quaternionic wavelets ]{Discretization of quaternionic continuous wavelet transforms}
\author{A. Askari Hemmat$^{1}$}
\address{$^{1}$ Department of Applied Mathematics, Faculty of Mathematics and Computer, Shahid Bahonar University of Kerman, Kerman, Iran. }
\author{K. Thirulogasanthar$^{2}$}
\address{$^{2}$ Department of Computer Science and Software
Engineering, Concordia University, 1455 De Maisonneuve Blvd. West,
Montr\'eal, Qu\'ebec, H3G 1M8, Canada. }
\author{A. Krzyzak$^{2}$}
\email{askari@mail.uk.ac.ir}
\email{santhar@gmail.com}
\email{krzyzak@cs.concordia.ca }

\thanks{ The research of AK was partly supported by the Natural Sciences and Engineering Research Council of Canada (NSERC). The research of KT was partly supported by The Fonds de recherche du Qu\'ebec - Nature et technologies (FRQNT)}
\subjclass{Primary
81R30, 42C40, 42C15}
\date{\today}
\dedicatory{Dedicated to S. Twareque Ali.}
\keywords{Affine group, square-integrable representation, Quaternion, Wavelets, Discretization.}
\begin{abstract}
 A scheme to form a basis and a frame for a Hilbert space of quaternion valued square integrable function   from a basis and a frame, respectively, of a Hilbert space of complex valued square integrable functions is introduced. Using the discretization techniques for 2D-continuous wavelet transform of the $SIM(2)$ group, the quaternionic continuous wavelet transform, living in a complex valued Hilbert space of square integrable functions, of the quaternion wavelet group is discretized, and thereby, a discrete frame for quaternion valued Hilbert space of square integrable functions is obtained.
\end{abstract}
\maketitle
\pagestyle{myheadings}
\section{Introduction}
Frames and wavelets play important role in many  physics and engineering. The continuous wavelet transform (CWT),  used extensively in signal analysis and image processing, is a joint time-frequency transform. This is in sharp contrast to the Fourier transform, which can be used either to analyze the frequency content of a signal, or its time profile, but not both at the same time. In the real case, the CWT is built out of the coherent states obtained from a unitary, irreducible and square-integrable representation of the one-dimensional affine group, $\R\rtimes\R^*$, a group of translations and dilations of the real line. For the complex case, exactly as in the one dimensional situation, wavelets can be derived from the complex affine group, $\C\rtimes\C^*$, which is also known as the similitude group of $\R^2$, denoted $SIM(2)$, consisting of dilations, rotations and translations of the plane. From the point of view of applications, these wavelets have become standard tools, for example, in image processing including radar imaging. For details see \cite{Van} and the many references therein. In a recent paper, \cite{AT}, a CWT from the quaternionic affine group, $\quat\rtimes\quat^*$, on a complex and on a quaternionic Hilbert space has been studied. A continuous wavelet transform on a quaternionic Hilbert space involves two complex functions. This means, compared to a wavelet transform on a complex Hilbert space, we get here two transforms. Moreover, the quaternionic affine group, which can also be called the dihedral similitude group of $\mathbb{R}^4$, is similar, though not quite, to having two copies of $SIM(2)$. In \cite{AT} we suggested that such a transform could be useful in studying stereophonic or stereoscopic signals.\\

In practice, for computational purposes, one uses a discretized version of the transform. Real life data cannot be handled with infinite information, either by human mind or by computers. In an infinite dimensional Hilbert space, even a discretized version of a continuous wavelet contains infinite amount of information about a signal. Any device invented by humans cannot transmit infinite amount of data in a finite time. Even in transmitting a finite amount of data there will be lost components (highly possible: nodes in transmitting wires or devices can cause this). In this regard, the reconstruction of a signal needs to be done with finite information in a reasonable time.  The whole procedure depends on how fast the reconstruction series converges and what kind of components of the signal can be lost and which part cannot be lost (in the way of transmitting). This is where choosing the right frame for the right problem enters. There is no universal frame that fits to every problem at hand.  As technology advances we shall face new problems every day and our search for finding a solution to solve them will continue as well. As we have suggested earlier, the discretization procedure, and thereby the discrete frame, we present in this manuscript is expected to help in resolving problems with stereophonic or stereoscopic signals.\\

The novelty of this manuscript may be put in words as follows: The 1-D continuous wavelet transform and its discretization grid (translation and dilation) cannot be directly transformed to 2D-case, because one needs to add the rotation parameter.  However, in the engineering literature, there were several attempts to generalize the 1-D case to 2-D through tensor products, which obviously exclude the rotation factor. It can be partially successful depending on what we are looking for.  The first attempt in considering the rotation factor, in solving this issue, was given in \cite {Van} for the $SIM (2) $ (it resulted from the Ph.D. thesis of one of the authors of \cite {Van}). Even in their construction there is a drawback, as the discretization grid keep expanding along the angular variable. In this manuscript we fix this issue and use it to discretize the quaternionic continuous wavelet transform. Along the way, we describe how we can obtain frames and bases for quaternionic square integrable function spaces using their complex or real counterparts.\\

The article is organized as follows: In section 2 we gather relevant materials about quaternions and quaternionic Hilbert spaces as needed to the rest of the manuscript. In Section 3 we address the issue concerning lifting a basis from a complex-valued Hilbert space to a quaternion-valued Hilbert space and the same concept for frames is presented in section 4. The 2D continuous wavelet transform is discretized in \cite{Van} with a discretization grid. A modified discretization grid and its advantages and the drawbacks of the grid presented in \cite{Van} are presented in section 5. In section 6, using the grid of section 5 continuous quaternionic wavelet transform on a complex-valued Hilbert space is discretized, and thereby, using the results of section 4 a wavelet frame on quaternion-valued Hilbert space is obtained.  Section 7 ends the article with a conclusion.

\section{Mathematical preliminaries}
In order to make the paper self-contained, we recall a few facts about quaternions which may not be well-known. In particular, we revisit the $2\times 2$ complex matrix representations of quaternions and quaternionic Hilbert spaces. For details one could refer \cite{Ad, Gra1, colsabstru, ghimorper}.

\subsection{Quaternions}
Let $\quat$ denote the field of all quaternions and $\quat^*$ the group (under quaternionic
multiplication) of all
invertible quaternions. A general quaternion can be written as
$$\bfrakq = q_0 + q_1 \bi + q_2 \bj + q_3 \bk, \qquad q_0 , q_1, q_2, q_3 \in \mathbb R, $$
where $\bi,\bj,\bk$ are the three quaternionic imaginary units, satisfying
$\bi^2 = \bj^2 = \bk^2 = -1$ and $\bi\bj = \bk = -\bj\bi,  \; \bj\bk = \bi = -\bk\bj,
   \; \bk\bi = \bj = - \bi\bk$. The quaternionic conjugate of $\bfrakq$ is
$$ \overline{\bfrakq} = q_0 - \bi q_1 - \bj q_2 - \bk q_3 . $$
We shall use the
$2\times 2$ matrix representation of the quaternions, in which
$$
   \bi = \sqrt{-1}\sigma_1, \quad \bj = -\sqrt{-1}\sigma_2, \quad \bk  = \sqrt{-1}\sigma_3, $$
and the $\sigma$'s are the three Pauli matrices,
$$
  \sigma_1 = \begin{pmatrix} 0 & 1\\ 1& 0 \end{pmatrix}, \quad
\sigma_2 = \begin{pmatrix} 0 & -i\\ i & 0\end{pmatrix}, \quad
\sigma_3 = \begin{pmatrix} 1 & 0\\ 0 & -1 \end{pmatrix}, $$
to which we add
$$\sigma_0 = \mathbb I_2 = \begin{pmatrix} 1 & 0 \\ 0 & 1\end{pmatrix}.$$
We shall also use the matrix valued vector $\bsigma = (\sigma_1, -\sigma_2 , \sigma_3)$. Thus,
in this representation,
$$\bfrakq = q_0\sigma_0 + i\bq\cdot \bsigma =
\begin{pmatrix} q_0 + iq_3 & -q_2 +iq_1 \\ q_2 + iq_1 & q_0 -iq_3\end{pmatrix}, \quad
\bq = (q_1, q_2, q_3 ).  $$
In this representation, the quaternionic conjugate of $\bfrakq$ is given by $\bfrakq^\dag$.
Introducing  two complex variables, which we write as
$$ z_1 = q_0 + iq_3 , \qquad z_2 = q_2 + iq_1, $$
we may also write
\be
  \bfrakq = \begin{pmatrix} z_1 & -\overline{z}_2\\ z_2 & \overline{z}_1 \end{pmatrix}.
\label{comp-rep-quat}
\en
From this it is clear that the group $\quat^*$ is isomorphic to the {\em affine $SU(2)$ group},
i.e., $\mathbb R^{>0}\times SU(2)$, which is the group $SU(2)$ together with all (non-zero)
dilations. As a set  $\quat^* \simeq \mathbb R^{>0} \times S(4)$, where $S(4)$ is the surface of the sphere, or more simply, $\quat^* \simeq \mathbb R^4 \backslash \{\mathbf 0\}$.
From (\ref{comp-rep-quat}) we get
\be
  \text{det}[\bfrakq] = \vert z_1\vert^2 + \vert z_2\vert^2 = q_0^2 + q_1^2 + q_2^2 + q_3^2
     : = \vert \bfrakq \vert^2 ,
\label{quat-norm}
\en
$\vert \bfrakq \vert$ denoting the usual norm of the quaternion $\bfrakq$. Note also that
$$
  \bfrakq^\dag \bfrakq =   \bfrakq \bfrakq^\dag = \vert \bfrakq \vert^2\; \mathbb I_2 . $$
If $\bfrakq$ is invertible,
$$
  \bfrakq^{-1} =  \frac 1{\vert \bfrakq \vert^2 }
  \begin{pmatrix} \overline{z}_1  &  \overline{z}_2 \\ -z_2 & z_1 \end{pmatrix} .$$

\subsection{Quaternionic Hilbert spaces}
In this subsection we  define left and right quaternionic Hilbert spaces. For details we refer the reader to \cite{Ad, Gra1, colsabstru, ghimorper}. We also define the Hilbert space of square-integrable functions on quaternions following \cite{Vis}.
\subsubsection{Right Quaternionic Hilbert Space}
Let $V_{\quat}^{R}$ be a vector space under right multiplication by quaternions.  For $f,g,h\in V_{\quat}^{R}$ and $\bfrakq\in \quat$, the inner product
$$\langle\cdot\mid\cdot\rangle:V_{\quat}^{R}\times V_{\quat}^{R}\longrightarrow \quat$$
satisfies the following properties
\begin{enumerate}
\item[(i)]
$\overline{\langle f\mid g\rangle}=\langle g\mid f\rangle$
\item[(ii)]
$\|f\|^{2}=\langle f\mid f\rangle>0$ unless $f=0$, a real norm
\item[(iii)]
$\langle f\mid g+h\rangle=\langle f\mid g\rangle+\langle f\mid h\rangle$
\item[(iv)]
$\langle f\mid g\bfrakq\rangle=\langle f\mid g\rangle\bfrakq$
\item[(v)]
$\langle f\bfrakq\mid g\rangle=\overline{\bfrakq}\langle f\mid g\rangle$
\end{enumerate}
where $\overline{\bfrakq}$ stands for the quaternionic conjugate. It is always assumed that the
space $V_{\quat}^{R}$ is complete under the norm given above. Then,  together with $\langle\cdot\mid\cdot\rangle$ this defines a right quaternionic Hilbert space. Quaternionic Hilbert spaces share most of the standard properties of complex Hilbert spaces. The Dirac bra-ket notation
can be adapted to quaternionic Hilbert spaces:
$$\mid f\bfrakq\rangle=\mid f\rangle\bfrakq,\hspace{1cm}\langle f\bfrakq\mid=\overline{\bfrakq}\langle f\mid\;, $$
for a right quaternionic Hilbert space, with $\vert f\rangle$ denoting the vector $f$ and $\langle f\vert$ its dual vector.

\subsubsection{Left Quaternionic Hilbert Space}
Let $V_{\quat}^{L}$ be a vector space under left multiplication by quaternions.  For $f,g,h\in V_{\quat}^{L}$ and $\bfrakq\in \quat$, the inner product
$$\langle\cdot\mid\cdot\rangle:V_{\quat}^{L}\times V_{\quat}^{L}\longrightarrow \quat$$
satisfies the following properties
\begin{enumerate}
\item[(i)]
$\overline{\langle f\mid g\rangle}=\langle g\mid f\rangle$
\item[(ii)]
$\|f\|^{2}=\langle f\mid f\rangle>0$ unless $f=0$, a real norm
\item[(iii)]
$\langle f\mid g+h\rangle=\langle f\mid g\rangle+\langle f\mid h\rangle$
\item[(iv)]
$\langle \bfrakq f\mid g\rangle=\bfrakq\langle f\mid g\rangle$
\item[(v)]
$\langle f\mid \bfrakq g\rangle=\langle f\mid g\rangle\overline{\bfrakq}$
\end{enumerate}
Again, we shall assume that the space $V_{\quat}^{L}$ together with $\langle\cdot\mid\cdot\rangle$ is a separable Hilbert space. Also,
\begin{equation}\label{leftcs}
\mid \bfrakq f\rangle=\mid f\rangle\overline{\bfrakq},\hspace{1cm}\langle \bfrakq f\mid=\bfrakq\langle f\mid.
\end{equation}
Note that, because of our convention for inner products, for a left quaternionic Hilbert space, the bra vector $\langle f\mid$ is to be identified with the vector itself, while the ket vector $\mid f \rangle$ is to be identified with its dual.
(There is a natural left multiplication by quaternionic scalars on the dual of a right quaternionic Hilbert space and a similar right multiplication on the dual of a left quaternionic Hilbert space.)

The field of quaternions $\quat$ itself can be turned into a left quaternionic Hilbert space by defining the inner product $\langle \bfrakq \mid \bfrakq^\prime \rangle = \bfrakq \bfrakq^{\prime\dag} = \bfrakq\overline{\bfrakq^\prime}$ or into a right quaternionic Hilbert space with  $\langle \qu \mid \qu^\prime \rangle = \bfrakq^\dag \bfrakq^\prime = \overline{\bfrakq}\bfrakq^\prime$.
\subsubsection{Quaternionic Hilbert Spaces of Square-integrable Functions}
Let $(X, \mu)$ be a measure space and $\quat$  the field of quaternions, then
$$L^2_{\quat}(X,\mu)=\left\{f:X\rightarrow \quat\;\; \left| \;\; \int_X|f(x)|^2d\mu(x)<\infty \right.\right\}$$\label{L^2}
is a right quaternionic Hilbert space, with the (right) scalar product
\begin{equation}
\langle f \mid g\rangle =\int_X \overline{f(x)} g(x)\; d\mu(x),
\label{left-sc-prod}
\end{equation}
where $\overline{f(x)}$ is the quaternionic conjugate of $f(x)$, and (right)  scalar multiplication $f\bfraka , \; \bfraka\in \quat,$ with $(f \bfraka)(x) = f(x)\bfraka $ (see \cite{Vis} for details). Similarly, one could define a left quaternionic Hilbert space of square-integrable functions.

\subsection{Some notations}
Let $X$ be a locally compact space with measure $\nu$. We form the following Hilbert spaces of square integrable functions.
\begin{eqnarray*}
\rh&=&L^2_{\R}(X, d\nu)=\left\{f:X\longrightarrow\R~\vert~\int_X\abs{f(x)}^2d\nu(x)<\infty\right\}
\end{eqnarray*}
and in the same way $\ch=L^2_{\C}(X, d\nu)$ and
$\qh=L^2_{\quat}(X,d\nu)$. When $X=\quat$, we denote these spaces by $\mathfrak{K}_{\R}, \kc, \hquat$ respectively.
\subsection{A right quaternionic Hilbert space}\label{subsec-quat-hilb-sp}
We consider the Hilbert space $\hquat$, of quaternion-valued functions over the quaternions.
An element  $\bfrakf \in \hquat$ has the form
\be
 \bfrakf (\bfrakx) = \begin{pmatrix} f_1 (\bfrakx ) & -\overline{f_2 (\bfrakx )}\\
                                      f_2 (\bfrakx ) & \overline{f_1 (\bfrakx )}\end{pmatrix},
                                      \quad \bfrakx \in \mathbb H ,
\label{hquat-vect}
\en
where $f_1$ and $f_2$ are two complex-valued functions over the quaternions. The norm in $\hquat$ is given by
\be
  \Vert \bfrakf\Vert_{\mathfrak K_\quat}^2 = \int_{\mathbb H}\bfrakf (\bfrakx )^\dag \bfrakf (\bfrakx )
  \; d\bfrakx = \int_{\mathbb H}\vert\bfrakf (\bfrakx )\vert^2 \; d\bfrakx
      = \left[\int_{\quat} \left(\;\vert f_1 (\bfrakx ) \vert^2 + \vert f_2 (\bfrakx ) \vert^2\;\right)\; d\bfrakx \;
        \right]\sigma_0 ,
\label{bquat-norm}
\en
the finiteness of which implies that both $f_1$ and $f_2$ have to be elements of $\kc =
L^2_{\mathbb C}(\mathbb H , d\bfrakx )$, so that we may write
$$
   \Vert \bfrakf\Vert_{\mathfrak K_\quat}^2 = \left(\; \Vert f_1 \Vert_{\mathfrak K_\mathbb C}^2 +  \Vert f_2 \Vert_{\mathfrak K_\mathbb C}^2 \; \right)\sigma_0 . $$
In view of this, we may also write $\hquat = L^2_\mathbb H ( \quat , d\bfrakx )$.
In using the ``bra-ket'' notation we shall use the notation and convention:
\be
 \bbra \bfrakf \bmid =   \begin{pmatrix}\langle  f_1 \vert   & \langle f_2 \vert\\
                                     -\langle \overline{f}_2 \vert &  \langle \overline{f}_1 \vert\end{pmatrix},
 \quad \text{and} \quad
  \bmid \bfrakf \bket =   \begin{pmatrix}\vert f_1 \rangle   & -  \vert \overline{f}_2 \rangle\\
                                        \vert f_2 \rangle &  \vert \overline{f}_1 \rangle
                                        \end{pmatrix},
  \label{bra-ket}
 \en
 $\overline{f}$ denoting, as usual, the complex conjugate of the function $f.$
 The scalar product of two vectors $\bfrakf, \bfrakf' \in \hquat$ is
 \bea
  \bbra \bfrakf \bmid \bfrakf' \bket & = & \int_{\mathbb H}\bfrakf (\bfrakx )^\dag \bfrakf' (\bfrakx )
  \; d\bfrakx \nonumber\\
  & = & \begin{pmatrix}
  \langle f_1 \mid f_1^\prime\rangle_{\mathfrak H_\mathbb C} +
  \langle f_2 \mid f_2^\prime\rangle_{\mathfrak H_\mathbb C}
 & -\langle f_2^\prime \mid \overline{f}_1\rangle_{\mathfrak H_\mathbb C} +
 \langle f_1^\prime \mid \overline{f}_2\rangle_{\mathfrak H_\mathbb C} \\
  \langle \overline{f}_2^\prime \mid {f}_1\rangle_{\mathfrak H_\mathbb C} -
 \langle \overline{f}_1^\prime \mid {f}_2\rangle_{\mathfrak H_\mathbb C}
 & \langle f_1^\prime \mid f_1 \rangle_{\mathfrak H_\mathbb C} +
  \langle f_2^\prime \mid f_2\rangle_{\mathfrak H_\mathbb C} .
\end{pmatrix}
\label{hquat-sc-prod}
  \ena
  Note that
$$ \bbra \bfrakf \bmid \bfrakf' \bket^\dag =\bbra \bfrakf' \bmid \bfrakf \bket . $$
We see that if $\bfrakf$ is orthogonal to $\bfrakf'$ in $\hquat$ then
$$ \langle f_1 \mid f_1^\prime\rangle_{\mathfrak H_\mathbb C} +
  \langle f_2 \mid f_2^\prime\rangle_{\mathfrak H_\mathbb C} =
    \langle \overline{f}_2^\prime \mid {f}_1\rangle_{\mathfrak H_\mathbb C} -
 \langle \overline{f}_1^\prime \mid {f}_2\rangle_{\mathfrak H_\mathbb C}  = 0 .$$
In other words, defining the two vectors $\mathbf f = \begin{pmatrix} f_1 \\ f_2 \end{pmatrix} , \; \mathbf f' = \begin{pmatrix} f_1'\\ f_2'\end{pmatrix} \in
\kc \oplus \kc$, the orthogonality of  $\bfrakf$ and  $\bfrakf'$ in $\hquat$ implies
$\langle \mathbf f \mid \mathbf f' \rangle = 0$, i.e., the  orthogonality of   $\mathbf f$
and $\mathbf f'$
in $\kc \oplus \kc$ and in addition  (in an obvious notation), that  $\mathbf f \wedge \mathbf f' = 0$.

Multiplication by quaternions on $\hquat$ is defined from the right:
$$
  (\hquat \times \quat )\ni (\bfrakf , \bfrakq )\longmapsto \bfrakf \bfrakq, \quad \text{such that}
  \quad (\bfrakf \bfrakq) (\bfrakx ) = \bfrakf (\bfrakx )\bfrakq , $$
i.e., we take $\hquat$ to be a right quaternionic Hilbert space. This convention is consistent with the scalar product (\ref{hquat-sc-prod}) in the sense that
$$ \bbra \bfrakf \bmid \bfrakf'\bfrakq \bket = \bbra \bfrakf \bmid \bfrakf' \bket\bfrakq  \quad
\text{and} \quad \bbra \bfrakf\bfrakq \bmid \bfrakf' \bket =
\bfrakq^{\dagger}\bbra \bfrakf \bmid \bfrakf' \bket .$$

 On the other hand,
the action of operators $\bA$ on vectors $\bfrakf \in \hquat$ will be from the left $(\bA, \bfrakf)
\longmapsto \bA\bfrakf$. In particular, an operator $A$ on $\kc$ defines an operator $\bA$ on $\hquat$
as,
$$
   (\bA\bfrakf)(\bfrakx) = \begin{pmatrix} (Af_1) (\bfrakx ) & -\overline{(Af_2) (\bfrakx )}\\
                                      (Af_2) (\bfrakx ) & \overline{(Af_1) (\bfrakx )}\end{pmatrix} .$$
Multiplication of operators by quaternions will also be from the left. Thus, $\bfrakq\bA$ acts on
the vector $\bfrakf$ in the manner
$$ (\bfrakq\bA\bfrakf)(\bfrakx) = \bfrakq (\bA\bfrakf)(\bfrakx) . $$
We shall also need the``rank-one operator''
\bea\label{V1}
\bmid \bfrakf \bket\bbra \bfrakf' \bmid  & = &
\begin{pmatrix} \vert f_1 \rangle& - \vert \overline{f}_2 \rangle\\
                                        \vert f_2 \rangle &  \vert \overline{f}_1 \rangle
                                        \end{pmatrix}
 \begin{pmatrix}\langle  f_1' \vert  & \langle f_2' \vert\\
              -\langle \overline{f}_2' \vert &  \langle \overline{f}_1' \vert\end{pmatrix}
              \nonumber \\
    & = & \begin{pmatrix} \vert f_1 \rangle\langle f_1'\vert
                           + \vert \overline{f}_2  \rangle\langle \overline{f}_2'\vert
               & \vert f_1 \rangle\langle f_2'\vert
                           - \vert \overline{f}_2  \rangle\langle \overline{f}_1'\vert \\
                  -\vert \overline{f}_1 \rangle\langle \overline{f}_2'\vert
                           + \vert {f}_2  \rangle\langle {f}_1'\vert
               &  \vert \overline{f}_1  \rangle\langle \overline{f}_1'\vert
                          + \vert f_2 \rangle\langle f_2'\vert\; .
 \end{pmatrix}
 \ena
\section{Base lifting}
In this section we provide a general scheme for lifting a basis to quaternionic Hilbert spaces from their real and complex counterparts. However, one should be aware that the bases lifted here are only a possibility. In comparing the three spaces, real, complex and quaternion-valued square integrable function, the quaternionic Hilbert space is so gigantic. The procedure described below is just a lifting scheme from the reals and complexes to quaternions.\\
 It is known that if $\{\phi_n\}_{n=0}^{\infty}$ is a basis for $\rh$, then it is also a basis for $\ch$ in the sense that, for $f\in\rh$ and $g\in\ch$,
$$f=\sum_{n=0}^{\infty}a_n\phi_n;\quad a_n\in\R\quad\mbox{and}\quad g=\sum_{n=0}^{\infty}b_n\phi_n;\quad b_n\in\C.$$
\begin{lemma}\label{L0}
If $\{\phi_n\}_{n=0}^\infty$ is an orthonormal basis of $\ch$, then  $\{\overline{\phi}_n\}_{n=0}^\infty$ is also a basis for $\ch$.
\end{lemma}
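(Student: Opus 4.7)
The plan is to exploit the fact that complex conjugation $C:\ch\to\ch$, $f\mapsto \overline{f}$, is an antilinear isometric involution: it is norm-preserving (since $|\overline{f(x)}|=|f(x)|$), satisfies $C^2=\mathrm{id}$, and intertwines sums and scalar multiplication in the antilinear way ($C(f+g)=Cf+Cg$, $C(\lambda f)=\overline{\lambda}\,Cf$). The claim then reduces to showing that the image of an orthonormal basis under $C$ is again an orthonormal basis.

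First I would verify orthonormality of $\{\overline{\phi}_n\}_{n=0}^\infty$ directly from the definition (\ref{left-sc-prod}) of the inner product on $\ch$:
\begin{equation*}
\langle \overline{\phi}_n\mid \overline{\phi}_m\rangle
=\int_X \phi_n(x)\,\overline{\phi_m(x)}\,d\nu(x)
=\overline{\int_X \overline{\phi_n(x)}\,\phi_m(x)\,d\nu(x)}
=\overline{\langle \phi_n\mid\phi_m\rangle}=\overline{\delta_{nm}}=\delta_{nm}.
\end{equation*}
Next I would establish completeness. Given any $f\in\ch$, the conjugate $\overline{f}$ also lies in $\ch$, so by the basis assumption there exist unique coefficients $a_n\in\mathbb{C}$ with
\begin{equation*}
\overline{f}=\sum_{n=0}^\infty a_n\phi_n \quad\text{in }\ch.
\end{equation*}
Applying $C$ to both sides, which is legitimate because $C$ is continuous (in fact isometric), and using antilinearity to pull the scalars out with a conjugate, I obtain
\begin{equation*}
f=\sum_{n=0}^\infty \overline{a_n}\,\overline{\phi}_n,
\end{equation*}
exhibiting the desired expansion. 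Uniqueness of the coefficients follows from orthonormality of $\{\overline{\phi}_n\}$ established in the previous step, since $\overline{a_n}=\langle \overline{\phi}_n\mid f\rangle$.

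There is no serious obstacle here; the entire argument is formal bookkeeping, and the only point that requires a moment's care is to keep track of where the conjugate lands when commuting $C$ past the inner product or past complex scalars. The result is essentially the observation that $C$ is an antiunitary operator on $\ch$, and antiunitary operators carry orthonormal bases to orthonormal bases.
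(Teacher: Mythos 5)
Your proof is correct. The paper actually states Lemma \ref{L0} without giving any proof at all, so there is nothing to compare against; your argument --- verifying orthonormality of $\{\overline{\phi}_n\}$ directly from the inner product (\ref{left-sc-prod}) and obtaining completeness by applying the (antiunitary, hence continuous) conjugation operator to the expansion of $\overline{f}$ --- is exactly the standard argument the authors evidently had in mind, and it correctly tracks where the conjugates land under their convention that the inner product is conjugate-linear in the first slot.
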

An orthonormal basis in $\hquat$ can be  built using an orthonormal basis in $\kc$.
\begin{proposition}\label{P0}
Let
$\{\phi_n\}_{n=0}^\infty$ be an orthonormal basis of $\kc = L^2_{\mathbb C} (\quat, d\bfrakx )$. Define the
vectors
\be
  \bmid \bPhi_n\bket =  \frac 1{\sqrt{2}} \begin{pmatrix} \vert \phi_n \rangle&  \vert \phi_n \rangle\\
                                        -\vert \overline{\phi}_n \rangle &  \vert \overline{\phi}_n \rangle
                                        \end{pmatrix}, \quad n = 0,1,2, \ldots ,
\en
in $\hquat$. Then the family $\{\bPhi_n\}_{n=0}^{\infty}$ is a basis for $\hquat$.
\end{proposition}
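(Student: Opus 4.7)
The plan is to treat $\hquat$ as a right-quaternionic Hilbert space and exhibit every $\bfrakf \in \hquat$ as a convergent expansion $\bfrakf = \sum_n \bPhi_n \bfrakq_n$ with unique quaternionic coefficients $\bfrakq_n$. The first thing to do is read off the two complex components of $\bPhi_n$ by comparing the given matrix with the template (\ref{bra-ket}). A direct comparison yields $(\Phi_n)_1 = \phi_n/\sqrt{2}$ and $(\Phi_n)_2 = -\overline{\phi}_n/\sqrt{2}$, both lying in $\kc$.

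Next I would verify orthonormality $\bbra \bPhi_n \bmid \bPhi_m \bket = \delta_{nm}\sigma_0$ by plugging these components into the scalar-product formula (\ref{hquat-sc-prod}). The four matrix entries each reduce to linear combinations of $\langle \phi_n \mid \phi_m\rangle_{\kc}$ and $\langle \overline{\phi}_n \mid \overline{\phi}_m \rangle_{\kc}$, both of which equal $\delta_{nm}$ (the second is the complex conjugate of the first). The cross terms in the off-diagonal positions involve $\int \phi_m \overline{\phi}_n\, d\bfrakx = \delta_{nm}$ as well, and the specific signs in the matrix defining $\bPhi_n$ are precisely what make those off-diagonal entries cancel.

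For completeness, take an arbitrary $\bfrakf \in \hquat$ with components $f_1,f_2 \in \kc$. Since $\{\phi_n\}$ is an orthonormal basis of $\kc$, expand $f_1 = \sum_n c_n \phi_n$; since $\{\overline{\phi}_n\}$ is also a basis of $\kc$ by Lemma \ref{L0}, expand $f_2 = \sum_n e_n \overline{\phi}_n$. Now define $\bfrakq_n \in \quat$ via the two complex entries $a_n = (c_n - e_n)/\sqrt{2}$ and $b_n = (c_n + e_n)/\sqrt{2}$ as in (\ref{comp-rep-quat}), and carry out the $2\times 2$ matrix product $\bPhi_n(\bfrakx)\bfrakq_n$. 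Rearranging the result into the standard form $\begin{pmatrix} F_1 & -\overline{F}_2\\ F_2 & \overline{F}_1\end{pmatrix}$, one finds $F_1 = c_n\phi_n$ and $F_2 = e_n\overline{\phi}_n$. Summing over $n$ reproduces $\bfrakf$ componentwise, and because the $\hquat$-norm (\ref{bquat-norm}) decouples as the sum of the two $\kc$-norms, convergence of the component series in $\kc$ forces convergence of $\sum_n \bPhi_n \bfrakq_n$ to $\bfrakf$ in $\hquat$. Uniqueness of the $\bfrakq_n$ follows from the orthonormality established in the previous step by taking inner products $\bbra \bPhi_m \bmid \bfrakf\bket$.

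The main obstacle I expect is bookkeeping: right multiplication by $\bfrakq_n$ couples the two complex components through the $2\times 2$ representation, so one must verify carefully that the specific combination $(a_n,b_n) = ((c_n-e_n)/\sqrt{2},\,(c_n+e_n)/\sqrt{2})$ is exactly what is needed to make the upper-left block of $\bPhi_n\bfrakq_n$ proportional to $\phi_n$ and the lower-left block proportional to $\overline{\phi}_n$, with the correct off-diagonal entries filling in by quaternionic structure. Modulo this purely algebraic check, the proposition reduces to the two independent scalar expansions in $\kc$.
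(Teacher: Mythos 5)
Your proposal is correct and takes essentially the same route as the paper: verify orthonormality of the $\bPhi_n$ in $\hquat$, then establish completeness using the fact (Lemma \ref{L0}) that both $\{\phi_n\}$ and $\{\overline{\phi}_n\}$ are orthonormal bases of $\kc$, the only difference being that the paper's formal proof gets completeness by showing $\bbra \bPhi_n \bmid \bfrakf \bket = 0$ for all $n$ forces $f_1 = f_2 = 0$, whereas you construct the expansion directly. Your explicit computation of the quaternionic coefficients and the check that right multiplication by $\bfrakq_n$ recombines the two complex components correctly is precisely the verification the paper records immediately after its proof (with $\bfrakq_n = \bbra \bPhi_n \bmid \bfrakf \bket$), so nothing is missing.
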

\begin{proof}
 It is easy to check that these vectors are orthonormal in $\hquat$ and
$$ \bbra \bPhi_n \bmid \bfrakf\bket =
\frac 1{\sqrt{2}}\begin{pmatrix}
  \langle \phi_n \mid f_1\rangle_{\mathfrak H_\mathbb C} -
  \langle \overline{\phi}_n \mid f_2\rangle_{\mathfrak H_\mathbb C}
 & -\langle f_2 \mid \overline{\phi}_n\rangle_{\mathfrak H_\mathbb C} -
 \langle f_1 \mid \phi_n\rangle_{\mathfrak H_\mathbb C} \\
  \langle \overline{f}_2 \mid \phi_n\rangle_{\mathfrak H_\mathbb C} +
 \langle \overline{f}_1 \mid \overline{\phi}_n\rangle_{\mathfrak H_\mathbb C}
 & \langle f_1 \mid \phi_n \rangle_{\mathfrak H_\mathbb C} -
  \langle f_2 \mid \overline{\phi}_n\rangle_{\mathfrak H_\mathbb C}
\end{pmatrix} .$$
Now, $\bbra \bPhi_n \bmid \bfrakf\bket=0$ implies
$\langle f_1\vert \phi_n \rangle=0$ and $\langle f_2\vert \overline{\phi}_n \rangle=0$. From the fact that  $\{\phi_n\}_{n=0}^{\infty}$ and $\{\overline{\phi}_n\}_{n = 0}^\infty$ are orthonormal bases for $\ch$, we get $f_1=f_2=0$, and thereby,  $\bmid\bfrakf\bket=0$.
\end{proof}
Indeed, with
$$
      \bmid \bfrakf \bket = \begin{pmatrix} \vert f_1 \rangle& - \vert \overline{f}_2 \rangle\\
                                        \vert f_2 \rangle &  \vert \overline{f}_1 \rangle
                                        \end{pmatrix} \in  L^2_{\mathbb H} (\quat, d\bfrakx ), $$
and writing
$$
 \vert f_1 \rangle = \sum_{n=0}^\infty b_n \vert\phi_n\rangle, \;\;
 \vert f_2 \rangle = \sum_{n=0}^\infty c_n \vert\phi_n\rangle,  \quad \text{with} \quad
  b_n = \langle \phi_n \mid f_1\rangle ,\;\;  c_n = \langle \phi_n \mid f_2\rangle , $$
we easily verify that
$$  \bmid \bfrakf \bket = \sum_{n=0}^\infty\bmid \bPhi_n\bket \bfrakq_n , $$
where
$ \bfrakq_n =\bbra \bPhi_n \bmid \bfrakf\bket.$
A simpler basis can also be chosen for $\qh$ using the basis of $\ch$.
\begin{proposition}\label{P1}
Let $\{\phi_n\}_{n=0}^{\infty}$ be an orthonormal basis for $\ch$, then
\be
  \bmid \bPhi_n\bket = \begin{pmatrix} \vert \phi_n \rangle&  0\\
                                       0 &  \vert \overline{\phi}_n \rangle
                                        \end{pmatrix}, \quad n = 0,1,2, \ldots ,
\label{quat-onb1}
\en
is an orthonormal basis for $\qh$.
\end{proposition}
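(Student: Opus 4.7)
The plan is to mirror the proof of Proposition~\ref{P0}: first verify orthonormality using the explicit inner product (\ref{hquat-sc-prod}), then exploit the fact that both $\{\phi_n\}$ and (by Lemma~\ref{L0}) $\{\overline{\phi}_n\}$ are orthonormal bases of $\ch$ to rule out any nonzero vector orthogonal to every $\bPhi_n$.

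First I would compute $\bbra \bPhi_n \bmid \bPhi_m \bket$. In the matrix parametrization (\ref{hquat-vect}), $\bmid \bPhi_n \bket$ has components $f_1 = \phi_n$ and $f_2 = 0$, so every term in (\ref{hquat-sc-prod}) that contains an $f_2$ or $f_2'$ slot drops out and one is left with
$$
\bbra \bPhi_n \bmid \bPhi_m \bket \;=\; \langle \phi_n \mid \phi_m\rangle_{\ch}\,\sigma_0 \;=\; \delta_{nm}\,\sigma_0,
$$
which is the required orthonormality in $\qh$.

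For completeness, a direct application of (\ref{hquat-sc-prod}) with left components $(\phi_n,0)$ and right components $(f_1,f_2)$ gives a matrix whose $(1,1)$ entry is $\langle \phi_n \mid f_1\rangle_{\ch}$ and whose $(1,2)$ entry is $-\langle f_2 \mid \overline{\phi}_n\rangle_{\ch}$, with the $(2,1)$ and $(2,2)$ entries being the corresponding complex conjugates. Requiring $\bbra \bPhi_n \bmid \bfrakf\bket = 0$ for every $n$ therefore forces $\langle \phi_n \mid f_1\rangle_{\ch} = 0$ and $\langle \overline{\phi}_n \mid f_2\rangle_{\ch} = 0$ for all $n$. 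By hypothesis $\{\phi_n\}$ is a basis of $\ch$, and by Lemma~\ref{L0} so is $\{\overline{\phi}_n\}$, so $f_1 = f_2 = 0$ and hence $\bmid \bfrakf\bket = 0$. Combined with orthonormality, this establishes the basis property.

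If an explicit reconstruction is wanted, set $b_n = \langle \phi_n \mid f_1\rangle_{\ch}$ and $c_n = \langle \overline{\phi}_n \mid f_2\rangle_{\ch}$, and form the quaternion coefficient $\bfrakq_n = \begin{pmatrix} b_n & -\overline{c}_n \\ c_n & \overline{b}_n \end{pmatrix}$ in the representation (\ref{comp-rep-quat}); matrix multiplication of the diagonal $\bmid \bPhi_n \bket$ by $\bfrakq_n$ then reproduces the $n$-th contribution to $\bmid \bfrakf \bket$ in the form (\ref{hquat-vect}), giving $\bmid \bfrakf\bket = \sum_n \bmid \bPhi_n \bket\,\bfrakq_n$. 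The only real subtlety is the forced asymmetry of expanding $f_1$ in $\{\phi_n\}$ while expanding $f_2$ in $\{\overline{\phi}_n\}$: the off-diagonal zeros of $\bPhi_n$ lock its $(1,1)$ entry to a multiple of $\phi_n$ and its $(2,2)$ entry to a multiple of $\overline{\phi}_n$, and Lemma~\ref{L0} is exactly what guarantees that this pairing still spans all of $\qh$.
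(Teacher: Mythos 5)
Your proposal is correct and follows essentially the same route as the paper: both reduce the completeness claim to the vanishing of $\langle \phi_n \mid f_1\rangle$ and $\langle \overline{\phi}_n \mid f_2\rangle$ for all $n$ and then invoke Lemma~\ref{L0} to conclude $f_1=f_2=0$ (the paper writes the adjoint quantity $\bbra\bfrakf\bmid\bPhi_n\bket$ rather than $\bbra\bPhi_n\bmid\bfrakf\bket$, which is immaterial). Your explicit orthonormality check and the reconstruction with the quaternionic coefficients $\bfrakq_n$ are extras the paper omits here but are consistent with what it does after Proposition~\ref{P0}.
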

\begin{proof}
Let $\bmid\bfrakf\bket\in\qh$, then from (\ref{bra-ket}), we can see that
$$\bbra\bfrakf\bmid\bPhi_n\bket=
 \begin{pmatrix} \langle f_1\vert \phi_n \rangle&  \langle f_2\vert \overline{\phi}_n \rangle\\
                                        -\langle\overline{f}_2\vert \phi_n \rangle &  \langle\overline{f}_1\vert \overline{\phi}_n \rangle
                                        \end{pmatrix}, \quad n = 0,1,2, \ldots ,
$$
hence, if $\bbra\bfrakf\bmid\bPhi_n\bket=0$, then $\langle f_1\vert \phi_n \rangle=0$ and $\langle f_2\vert \overline{\phi}_n \rangle=0$. From the fact that  $\{\phi_n\}_{n=0}^{\infty}$ and $\{\overline{\phi}_n\}_{n = 0}^\infty$ are orthonormal bases for $\ch$, we get $f_1=f_2=0$, and thereby,  $\bmid\bfrakf\bket=0$.
\end{proof}

\section{Frame lifting}
The argument given for the base lifting applies to frames as well. There is no need to elaborate any further.
\begin{lemma}\label{L1}
Let $\{\phi_n\}_{n=0}^{\infty}$ be a frame for $\kc$ with the lower frame bound $A$ and the upper frame bound $B$, then $\{\overline{\phi}_n\}_{n = 0}^\infty$ is also a frame for $\kc$ with the same frame bounds.
\end{lemma}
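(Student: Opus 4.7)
The plan is to reduce the frame inequality for $\{\overline{\phi}_n\}$ to the frame inequality for $\{\phi_n\}$ via the conjugation isometry $C : f \mapsto \overline{f}$ on $\kc$. First I would note that $C$ is an involution satisfying $\Vert \overline{f}\Vert_{\mathfrak K_\mathbb C} = \Vert f\Vert_{\mathfrak K_\mathbb C}$, since $|\overline{f(\bfrakx)}| = |f(\bfrakx)|$ pointwise and the measure is unchanged. Thus $C$ preserves $L^2$-norms.

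Next I would perform the key one-line computation using the inner product convention fixed in \eqref{left-sc-prod}. For any $f \in \kc$ and any $n$,
\[
\langle f \mid \overline{\phi}_n \rangle = \int_{\mathbb H} \overline{f(\bfrakx)}\,\overline{\phi_n(\bfrakx)}\, d\bfrakx = \overline{\int_{\mathbb H} \overline{\overline{f(\bfrakx)}}\, \phi_n(\bfrakx)\, d\bfrakx} = \overline{\langle \overline{f} \mid \phi_n\rangle}.
\]
Taking absolute values and squaring gives $|\langle f \mid \overline{\phi}_n\rangle|^2 = |\langle \overline{f} \mid \phi_n\rangle|^2$ for every $n$, so summation yields
\[
\sum_{n=0}^{\infty} |\langle f \mid \overline{\phi}_n\rangle|^2 = \sum_{n=0}^{\infty} |\langle \overline{f} \mid \phi_n\rangle|^2.
\]

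Finally I would apply the frame hypothesis to the vector $\overline{f} \in \kc$ to obtain
\[
A\,\Vert \overline{f}\Vert^2 \le \sum_{n=0}^{\infty} |\langle \overline{f} \mid \phi_n\rangle|^2 \le B\,\Vert \overline{f}\Vert^2,
\]
and combine this with the isometry $\Vert \overline{f}\Vert = \Vert f\Vert$ and the identity displayed above to conclude $A\Vert f\Vert^2 \le \sum_n |\langle f \mid \overline{\phi}_n\rangle|^2 \le B\Vert f\Vert^2$, which is exactly the frame condition for $\{\overline{\phi}_n\}$ with the same bounds $A$ and $B$.

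There is no serious obstacle here; the only point requiring care is keeping track of the conjugation convention in the inner product so that the identity $|\langle f\mid \overline{\phi}_n\rangle| = |\langle \overline{f}\mid \phi_n\rangle|$ really follows from just the definition, with no extra hypothesis on $\phi_n$ such as reality.
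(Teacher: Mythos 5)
Your proof is correct. The paper itself gives no argument for Lemma \ref{L1} --- it is stated without proof, the section opening merely remarking that ``the argument given for the base lifting applies to frames as well'' --- so there is nothing to diverge from; your reduction via the conjugation isometry $f\mapsto\overline{f}$ together with the identity $\langle f\mid\overline{\phi}_n\rangle=\overline{\langle\overline{f}\mid\phi_n\rangle}$ (which is exactly what the inner-product convention \eqref{left-sc-prod} yields) is the standard and intended argument, and it correctly delivers the same bounds $A$ and $B$. Your closing remark is also apt: no reality assumption on the $\phi_n$ is needed, only the antilinearity of conjugation and the invariance of the $L^2$-norm.
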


\begin{theorem}\label{FLT}
Let $\{\phi_n\}_{n=0}^{\infty}$ be a frame for $\kc$ with lower frame bound $A$ and upper frame bound $B$, then $\{\bPhi_n\}_{n = 0}^\infty$ and $\{\bPsi_n\}_{n = 0}^\infty$ are frames for $\hquat$ with the same frame bounds, where
\be
  \bmid \bPhi_n\bket =  \frac 1{\sqrt{2}} \begin{pmatrix} \vert \phi_n \rangle&  \vert \phi_n \rangle\\
                                        -\vert \overline{\phi}_n \rangle &  \vert \overline{\phi}_n \rangle
                                        \end{pmatrix},\quad
 \bmid \bPsi_n\bket =  \begin{pmatrix} \vert \phi_n \rangle& 0\\
                                       0 &  \vert \overline{\phi}_n \rangle
                                        \end{pmatrix},
 \quad n = 0,1,2, \ldots ,
\en
\end{theorem}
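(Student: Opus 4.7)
The strategy is to reduce the quaternionic frame inequality in $\hquat$ to two complex frame inequalities in $\kc$, one applied to $f_1$ via $\{\phi_n\}$ and one applied to $f_2$ via $\{\overline{\phi}_n\}$ (whose frame bounds are the same $A,B$ by Lemma \ref{L1}). Concretely, for an arbitrary $\bmid\bfrakf\bket\in\hquat$ with components $f_1,f_2\in\kc$, I need to compute $|\bbra\bfrakf\bmid\bPhi_n\bket|^2$ and $|\bbra\bfrakf\bmid\bPsi_n\bket|^2$, sum over $n$, and show the result is sandwiched between $A\,\Vert\bfrakf\Vert^2_{\mathfrak K_\mathbb H}$ and $B\,\Vert\bfrakf\Vert^2_{\mathfrak K_\mathbb H}$.

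First I would expand the matrices $\bbra\bfrakf\bmid\bPhi_n\bket$ and $\bbra\bfrakf\bmid\bPsi_n\bket$ using the scalar product formula (\ref{hquat-sc-prod}); both have the genuine quaternion shape $\begin{pmatrix} z_1 & -\overline{z}_2\\ z_2 & \overline{z}_1\end{pmatrix}$, so the squared norm is $|z_1|^2+|z_2|^2$ by (\ref{quat-norm}). For $\bPsi_n$ the block-diagonality makes the calculation immediate and gives
\[
|\bbra\bfrakf\bmid\bPsi_n\bket|^2 \;=\; |\langle \phi_n\mid f_1\rangle|^2 + |\langle\overline{\phi}_n\mid f_2\rangle|^2.
\]
For $\bPhi_n$, writing $a_n=\langle\phi_n\mid f_1\rangle$ and $b_n=\langle\overline{\phi}_n\mid f_2\rangle$, the pair $(z_1,z_2)$ works out (up to conjugation) to $\tfrac{1}{\sqrt{2}}\bigl(a_n-b_n,\,-(a_n+b_n)\bigr)$, and a direct expansion yields
\[
|\bbra\bfrakf\bmid\bPhi_n\bket|^2 \;=\; \tfrac{1}{2}|a_n-b_n|^2 + \tfrac{1}{2}|a_n+b_n|^2 \;=\; |a_n|^2 + |b_n|^2,
\]
the cross-terms $\pm 2\,\Re(\overline{a_n}b_n)$ cancelling.

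Summing over $n$ produces in both cases
\[
\sum_{n=0}^{\infty}|\bbra\bfrakf\bmid\bPhi_n\bket|^2 \;=\; \sum_n|\langle\phi_n\mid f_1\rangle|^2 + \sum_n|\langle\overline{\phi}_n\mid f_2\rangle|^2,
\]
to which I apply the frame hypothesis for $\{\phi_n\}$ to $f_1$ and, via Lemma \ref{L1}, the frame hypothesis with the same bounds for $\{\overline{\phi}_n\}$ to $f_2$. Using (\ref{bquat-norm}) to recognise $\Vert f_1\Vert^2_{\mathfrak K_\mathbb C}+\Vert f_2\Vert^2_{\mathfrak K_\mathbb C}$ as the scalar part of $\Vert\bfrakf\Vert^2_{\mathfrak K_\mathbb H}$ then gives the desired
\[
A\,\Vert\bfrakf\Vert^2_{\mathfrak K_\mathbb H} \;\le\; \sum_n|\bbra\bfrakf\bmid\bPhi_n\bket|^2 \;\le\; B\,\Vert\bfrakf\Vert^2_{\mathfrak K_\mathbb H},
\]
and identically for $\bPsi_n$.

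The main obstacle is the algebraic cancellation of the cross terms $\Re(\overline{a_n}b_n)$ in the $\bPhi_n$ calculation: this relies on the specific symmetric placement of $\pm\phi_n$ and $\pm\overline{\phi}_n$ together with the normalising factor $1/\sqrt{2}$ in the definition of $\bPhi_n$. Without that cancellation, the summand would fail to decouple into purely $f_1$ and purely $f_2$ contributions and the bounds $A,B$ could not be transferred to $\hquat$ without absorbing multiplicative constants. The $\bPsi_n$ case plays the role of a simpler warm-up for which this cancellation is not needed.
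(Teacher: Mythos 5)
Your proof is correct and follows essentially the same route as the paper: both reduce the quaternionic frame inequality to the two complex frame inequalities for $f_1$ against $\{\phi_n\}$ and $f_2$ against $\{\overline{\phi}_n\}$ (via Lemma \ref{L1}) through the key identity $\bmid\bbra\bfrakf\bmid\bPhi_n\bket\bmid^2=\bmid\bbra\bfrakf\bmid\bPsi_n\bket\bmid^2=\vert\langle f_1\mid\phi_n\rangle\vert^2+\vert\langle f_2\mid\overline{\phi}_n\rangle\vert^2$. The only cosmetic difference is that you obtain the cancellation of the cross terms by the parallelogram identity on the two quaternion components, whereas the paper gets the same cancellation by observing that the rank-one operator $\bmid\bPhi_n\bket\bbra\bPhi_n\bmid$ collapses to the block-diagonal projector $\bmid\bPsi_n\bket\bbra\bPsi_n\bmid$.
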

\begin{proof}
The dual vectors of $\bmid \bPhi_n\bket$ and $\bmid \bPsi_n\bket$, respectively, are
\be
  \bbra \bPhi_n\bmid =  \frac 1{\sqrt{2}} \begin{pmatrix} \langle \phi_n \vert&  -\langle \overline{\phi}_n \vert\\
                                        \langle \phi_n \vert &  \langle \overline{\phi}_n \vert
                                        \end{pmatrix},\quad
\bbra \bPsi_n\bmid =   \begin{pmatrix} \langle \phi_n \vert&  0\\
                                        0 &  \langle \overline{\phi}_n \vert
                                        \end{pmatrix},
 \quad n = 0,1,2, \ldots ,
\en
The projection operators are
\be\label{V2}
\bmid \bPsi_n\bket\bbra \bPsi_n\bmid=\bmid \bPhi_n\bket\bbra \bPhi_n\bmid=\begin{pmatrix} \vert\phi_n\rangle\langle \phi_n \vert& 0\\
                                        0 & \vert\overline{\phi}_n\rangle\langle \overline{\phi}_n \vert
                                        \end{pmatrix}, \quad n = 0,1,2, \ldots ,
\en
Since
\begin{eqnarray*}
\langle f_1\vert\phi_n\rangle&=&\langle\overline{\phi}_n\vert\overline{f}_1\rangle\\
\langle\phi_n\vert\overline{f}_2\rangle&=&\langle f_2\vert\overline{\phi}_n\rangle\\
\langle\overline{f}_1\vert\overline{\phi}_n\rangle&=&\langle\phi_n\vert f_1\rangle\\
\langle\overline{f}_2\vert\phi_n\rangle&=&\langle\overline{\phi}_n\vert f_2\rangle,
\end{eqnarray*}
we have
\begin{eqnarray*}
\langle f_1\vert \phi_n\rangle\langle\phi_n\vert f_1\rangle+\langle f_2\vert \overline{\phi}_n\rangle\langle\overline{\phi}_n\vert f_2\rangle&=&
\vert\langle f_1\vert \phi_n\rangle\vert^2+\vert\langle f_2\vert \overline{\phi}_n\rangle\vert^2\\
-\langle f_1\vert \phi_n\rangle\langle\phi_n\vert \overline{f}_2\rangle+\langle f_2\vert \overline{\phi}_n\rangle\langle\overline{\phi}_n\vert \overline{f}_1\rangle&=&0\\
-\langle \overline{f}_2\vert \phi_n\rangle\langle\phi_n\vert f_1\rangle+\langle \overline{f}_1\vert \overline{\phi}_n\rangle\langle\overline{\phi}_n\vert f_2\rangle&=&0\\
\langle \overline{f}_2\vert \phi_n\rangle\langle\phi_n\vert \overline{f}_2\rangle+\langle \overline{f}_1\vert \overline{\phi}_n\rangle\langle\overline{\phi}_n\vert \overline{f}_1\rangle&=&
\vert\langle \overline{f}_2\vert \phi_n\rangle\vert^2+\vert\langle\overline{f}_1\vert \overline{\phi}_n\rangle\vert^2
\end{eqnarray*}
Therefore, from (\ref{bra-ket}) and (\ref{V2}) we have, for $ n = 0,1,2, \ldots$,
\be\label{V3}
\bmid\bbra\bfrakf\bmid \bPsi_n\bket\bmid^2=\bmid\bbra\bfrakf\bmid \bPhi_n\bket\bmid^2=\begin{pmatrix}\vert\langle f_1\vert \phi_n\rangle\vert^2+\vert\langle f_2\vert \overline{\phi}_n\rangle\vert^2 & 0\\
                                        0 & \vert\langle \overline{f}_2\vert \phi_n\rangle\vert^2+\vert\langle\overline{f}_1\vert \overline{\phi}_n\rangle\vert^2
                                        \end{pmatrix}.
\en
Since the series in the frame condition converges, we have from lemma (\ref{L1}), 
\begin{eqnarray*}
 & &A(\|f_1\|^2_{\kc}+\|f_2\|^2_{\kc})\leq\sum_{n=0}^{\infty}(\vert\langle f_1\vert \phi_n\rangle\vert^2+\vert\langle f_2\vert \overline{\phi}_n\rangle\vert^2)\\
&=&\sum_{n=0}^{\infty}\vert\langle f_1\vert \phi_n\rangle\vert^2+\sum_{n=0}^{\infty}\vert\langle f_2\vert \overline{\phi}_n\rangle\vert^2\leq B(\|f_1\|^2_{\kc}+\|f_2\|^2_{\kc})
\end{eqnarray*}
Similarly, since $\|\overline{f}_1\|^2=\|f_1\|^2$ and $\|\overline{f}_2\|^2=\|f_2\|^2$, we have
\be
A(\|f_1\|^2_{\kc}+\|f_2\|^2_{\kc})\leq\sum_{n=0}^{\infty}(\vert\langle \overline{f}_2\vert \phi_n\rangle\vert^2+\vert\langle\overline{f}_1\vert \overline{\phi}_n\rangle\vert^2)\leq B(\|f_1\|^2_{\kc}+\|f_2\|^2_{\kc}).
\en
Therefore (matrix inequalities can be understood in the sense of positive definite matrices) we get
\be\label{FC}
A(\|f_1\|^2_{\kc}+\|f_2\|^2_{\kc})\sigma_0\leq\sum_{n=0}^{\infty}\bmid\bbra\bfrakf\bmid \bPhi_n\bket\bmid^2=\sum_{n=0}^{\infty}\bmid\bbra\bfrakf\bmid \bPsi_n\bket\bmid^2\leq B(\|f_1\|^2_{\kc}+\|f_2\|^2_{\kc})\sigma_0.
\en
Thus
\be
A\|\bfrakf\|^2_{\hquat}\leq\sum_{n=0}^{\infty}\bmid\bbra\bfrakf\bmid \bPhi_n\bket\bmid^2=\sum_{n=0}^{\infty}\bmid\bbra\bfrakf\bmid \bPsi_n\bket\bmid^2\leq B\|\bfrakf\|^2_{\hquat};\quad\forall\bfrakf\in\hquat.
\en
\end{proof}
\section{2D continuous wavelet transform and its discretization}
\subsection{The continuous wavelet transform}
Let $\C^*=\{z\in\C~\vert~z\not=0\}$ and $G^{\C}_{\text{aff}}=\C\rtimes\C^*$ with  group operation
$$(z,w)(z',w')=(z+wz',ww').$$
Let $\underline{b}=(b_1,b_2)^T$,  $r_{\theta}=\left(\begin{array}{cc}\cos\theta&\sin\theta\\-\sin\theta&\cos\theta\end{array}\right)\in SO(2);~~\theta\in[0,2\pi)$ and $\lambda>0$, then $G^{\C}_{\text{aff}}$ can be considered as
$$G^{\C}_{\text{aff}}=SIM(2)=\left\{(\underline{b},\lambda,r_{\theta})~\vert~ \underline{b}\in\mathbb{R}^2, \lambda>0, r_{\theta}\in SO(2)\right\}.$$
The affine action can be written as
$$\underline{x}\mapsto \lambda r_{\theta}\underline{x}+\underline{b};\qquad\underline{x}\in\mathbb{R}^2\approxeq\C.$$
Note that $G^{\C}_{\text{aff}}$ can also be represented in matrix form:
$$G^{\C}_{\text{aff}}=\left\{g=\left(\begin{array}{cc}\lambda r_{\theta}&\underline{b}\\ \underline{0}^T&1\end{array}\right)~\vert~\underline{b}\in\mathbb{R}^2, \lambda>0, r_{\theta}\in SO(2)\right\}.$$
Now consider the group representation $U:G^{\C}_{\text{aff}}\longrightarrow L^2(\mathbb{R}^2)$ by
\begin{equation}\label{W1}
[U(\underline{b},\lambda,r_{\theta})\psi](\underline{x})
=\frac{1}{\lambda}\psi\left(\frac{r_{\theta}^{-1}(\underline{x}-\underline{b})}{\lambda}\right)
\end{equation}
\label{2dimrep}
Then, it is known that the representation (\ref{2dimrep}) is unitary, irreducible and square-integrable with respect to the measure $\displaystyle dg=a^{-3}d^2\overrightarrow{b}dad\theta $, for details we refer the reader to \cite{Van}.
\subsection{ A refined discretization }
In \cite{Van} the continuous wavelet transform (\ref{2dimrep}) is discretized with the following discretization grid.
\begin{enumerate}
\item[$\bullet$] For dilation $a_j=\lambda^{-j},~~j\in\Z$ for some $\lambda>1$.
\item[$\bullet$]For the rotation the interval $[0,2\pi)$ is divided uniformly into $L_0$ pieces as $\theta_l=\frac{l2\pi}{L_0}$, where $l\in\Z_{L_0}=\{0,1,\cdots, L_0-1\}$.
\item[$\bullet$] For the translations it was taken as $\overrightarrow{b}_m=\overrightarrow{b}_{jlm_0m_1}=\lambda^{-j}rl\theta_0(m_o\beta_0,m_1\beta_1)$ with $m_0,m_1\in\Z,~\beta_0,\beta_1\geq 0, l\in\Z_{L_0}$.
\end{enumerate}
From the applications point of view, there is a  drawback with this grid. As the radial parameter moves away from the origin, the angular part gets out of control, that is, it will keep expanding, see Figure 1.
\begin{figure}[h]
\centering
\include{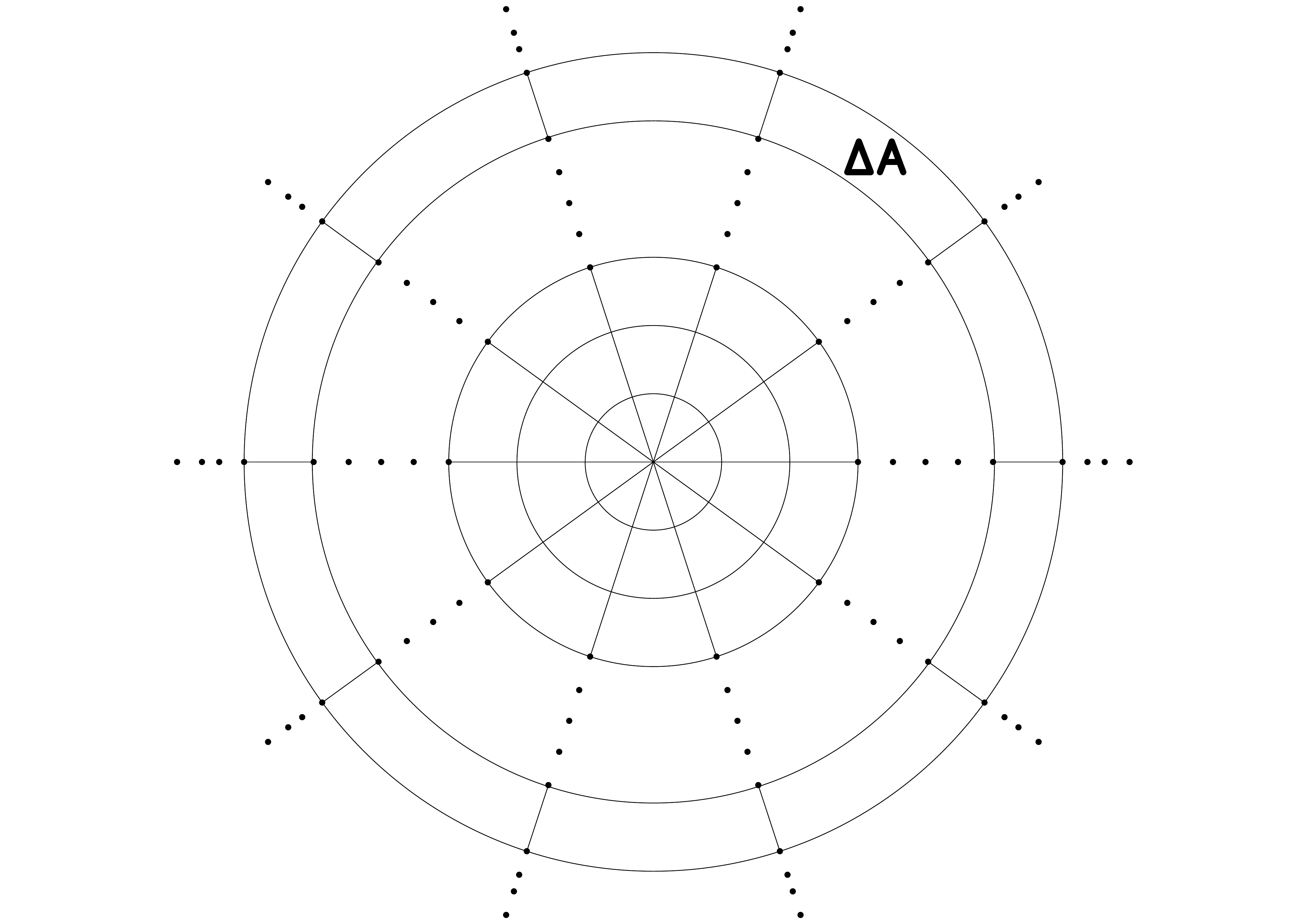}
\centerline{\includegraphics[width=3in]{final.pdf}}
\caption{$\Delta A$ between annulus for $L_0=10$}
\label{final.pdf}
\end{figure}

In this section, we shall present a modified grid, which prevents the angular expansion. For the dilations, we choose   the scale $ 0< \lambda_0 < 1$.
For rotation consider the matrix

$$ \mathcal{R}_{\lambda,\theta} =  \begin{pmatrix}
 \lambda  \cos(\theta) & - \lambda \sin(\theta)\\
  \lambda \sin(\theta) & \lambda \cos(\theta) \\
\end{pmatrix} .$$  For the translations, we have
 $ \vec{b}= \vec{b}_{tlm_0m_1}=t \mathcal{R}_{\lambda, \theta}(\vec{u}_{m_0m_1}), \mbox{ with}\vec{u}_{m_0m_1}= (m_0\beta_0, m_1\beta_1),~~ m_0, m_1 \in \mathbb{Z},~~\mbox{and for some}~~\beta_0, \beta_1\geq 0.$
Let $L>1$  be fixed and for $t \in \mathbb{Z}^{+}$, $\Delta\theta= \frac{2\pi}{tL},$  $\theta(t):= l\Delta\theta =\frac{l2\pi}{tL};~  l \in \mathbb{Z}_{t L}=\{0, 1, 2, 3, ..., tL-1\}$
 and $\lambda :=  \lambda_0, $ for some fixed $ 0<\lambda_0<1.$

 An estimate for the area $\Delta A $ in  Figure 1 (see also Figure 2)   is $ \Delta A \cong \Delta s  \Delta r =   r \Delta \theta \Delta r.$  If we choose $ 0< \lambda= \lambda_0<1,$ then in step  $t$ we have $r= t\lambda, \Delta\theta= \frac{2\pi}{tL}, \frac{(l-1)2\pi}{tL}\leq \theta=\theta(t)\leq \frac{l2\pi}{tL}$ and $ \Delta r=\lambda.$ So, $$\Delta A\cong t\lambda^2\Delta\theta = \lambda^2 2\pi/L.$$
In all steps we need the area $\Delta A $ to be less than a constant, say  $\eta.$  So in the first step we choose $\lambda$ so small and $ L$ so large  such that $\Delta  A = \frac{2\pi\lambda^2}{L} <\eta.$
We will choose the sampling grid $ \Lambda$ as follows
$$\Lambda = \Lambda (\lambda, L, \beta_0, \beta_1)=\{ (t\lambda, l\frac{ 2\pi}{tL}, \vec{b}_{tlm_0m_1}); (t, l,(m_0, m_1))\in \mathbb{Z}^{+} \times \mathbb{Z}_{tL} \times \mathbb{Z}^2 \}$$
 with wavelet coefficients $\{S_{tlm_0m_1}\vert (t, l, m_0, m_1)\in  \mathbb{Z}^{+} \times \mathbb{Z}_{tL} \times \mathbb{Z}^2 \}. \ \mbox{For} f\in L^2(\mathbb R^2),$
\begin{align}
S_{tlm_0m_1}= S(\vec{b}_m, t\lambda, l
 \Delta\theta )
 \nonumber
  = < \psi_{ \mathcal{R}_{\lambda, \theta}, \vec{b}_m}|f>
  \nonumber
 =\int_{\mathbb{R}^2}{d^2\vec{x}\overline{\psi_{ \mathcal{R}_{\lambda, \theta}, \vec{b}_m}(\vec{x})}f(\vec{x})}	
\end{align}
As usual the Fourier transform of $ \mathcal{R}_{\lambda, \theta}-$dilated and $\vec{b_m}-$ translated function $\psi$ is
$$ \widehat{\psi}_{ \mathcal{R}_{\lambda, \theta},\vec{b_m}}(\vec{x})=\frac{1}{2\pi}\int_{\mathbb{R}^2}e^{-i\vec{x} \dot{.} \vec{y}} \frac{1}{\lambda^2}\psi( \mathcal{R}^{-1}_{\lambda\theta}(\vec{y}-\vec{b_m}))d\vec{y},$$ using a change of variable we have
$$ \widehat{\psi}_{ \mathcal{R}_{\lambda, \theta},\vec{b_m}}(\vec{x})
=\frac{\lambda^2}{2\pi}e^{-i\vec{x}\dot{.}\vec{b_m}}\int_{\mathbb{R}^2}e^{-i \mathcal{R}_{\lambda, \theta}^T\vec{x}\dot{.} \vec{z}}\psi(\vec{z})d\vec{z}=\frac{\lambda^2}{2\pi}e^{-i\vec{x}\dot{.}\vec{b_m}}\widehat{\psi}( \mathcal{R}_{\lambda, \theta}^T \vec{x}). $$
Our task is now to find conditions on the grid  $\Lambda$  as defined above such that  $$\{\psi_{tlm}: (t, l, (m_0, m_1) )\in  \mathbb{Z}^{+} \times \mathbb{Z}_{tL}  \times \mathbb{Z}^2 \}$$
is a frame.
\begin{theorem}
Let $\psi$ be the wavelet in (\ref{2dimrep}) satisfying the following conditions
\begin{enumerate}
\item[(a)]$\displaystyle s(\lambda, L, \psi)=\mbox{essinf}_{\vec{k}\in \mathbb{R}^2} \sum_{t \in \mathbb{Z}^{+}}\sum_{l\in \mathbb{Z}_{tL}} |\widehat{\psi}(\mathcal{R}_{\lambda, \theta}^T\vec{k})|^2>0.$
\item[(b)]$\displaystyle S(\lambda, L, \psi)=\sup_{\vec{k}\in \mathbb{R}^2} \sum_{t \in \mathbb{Z}^{+}}\sum_{l\in \mathbb{Z}_{tL}} |\widehat{\psi}(\mathcal{R}_{\lambda, \theta}^T\vec{k})|^2 < \infty.$
\item[(c)]$\displaystyle\sup_{\vec{b}\in \mathbb{R}^2}(1+|\vec{b}|)^{1+\epsilon}\alpha(\vec{b})<\infty,~~\mbox {where} $\\
\item[(d)]$\displaystyle\alpha(\vec{b})= \sup_{\vec{k}\in \mathbb{R}^2} \sum_{t\in \mathbb{Z}^{+}}\sum_{l\in \mathbb{Z}_{tL}}| \hat{\psi}(\mathcal{R}_{\lambda, \theta}^T\vec{k}+\vec{b})|| \hat{\psi}(\mathcal{R}_{\lambda, \theta}^T\vec{k})|$
\end{enumerate}
Then there exist constants $\beta_0^c, \beta_1^c >0 $ such that
\begin{enumerate}
\item[(1)] $\forall \beta_0 \in (0, \beta_0^c), \beta_1 \in (0, \beta_1^c)$, the    family $\psi_{tlm_0m_1}$ associated to $(\lambda, L, \beta_0, \beta_1)$  is a frame of $L^2(\mathbb{R}^2, d^2\vec{x}).$
\item[(2)] $\forall \delta>0$, there exist $\beta_0 \in (\beta_0^c, \beta_0^c + \delta), \beta_1 \in (\beta_1^c, \beta_1^c + \delta),  $  such that the family  $\psi_{tlm_0m_1}$ associated to $(\lambda, L, \beta_0, \beta_1)$
is not a frame of $L^2(\mathbb{R}^2, d^2\vec{x}).$
\end{enumerate}
\end{theorem}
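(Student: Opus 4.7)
The plan is to emulate the classical Daubechies--Antoine--Vandergheynst frame estimate, working on the Fourier side. First, by Plancherel applied to each individual inner product, I rewrite
\[
S_{tlm_0m_1}=\langle \psi_{\mathcal R_{\lambda,\theta},\vec b_m}\mid f\rangle
=\frac{\lambda^2}{2\pi}\int_{\mathbb R^2} e^{i\vec x\cdot\vec b_m}\,\overline{\widehat\psi(\mathcal R_{\lambda,\theta}^T\vec x)}\,\widehat f(\vec x)\,d\vec x,
\]
using the Fourier identity computed just before the statement. Then I sum $|S_{tlm_0m_1}|^2$ first over $m_0,m_1\in\mathbb Z$. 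Because $\vec b_{tlm_0m_1}=t\mathcal R_{\lambda,\theta}\vec u_{m_0m_1}$, the exponentials become a lattice Fourier series in the variable $t\mathcal R_{\lambda,\theta}^T\vec x$, so after the change of variable $\vec k=\mathcal R_{\lambda,\theta}^T\vec x$ the translation sum reduces, by the standard Plancherel--Poisson identity on the dual lattice
$\Gamma^\ast_t=\frac{2\pi}{t}(\beta_0^{-1}\mathbb Z)\times\frac{2\pi}{t}(\beta_1^{-1}\mathbb Z)$,
to a diagonal term plus correction terms indexed by $\vec n\neq 0$ in that dual lattice.

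Second, I assemble the sums over $t\in\mathbb Z^+$ and $l\in\mathbb Z_{tL}$. The diagonal term produces
\[
\frac{(2\pi)^2}{\beta_0\beta_1}\int_{\mathbb R^2}|\widehat f(\vec k)|^2
\Bigl(\sum_{t}\sum_{l\in\mathbb Z_{tL}}\tfrac{1}{t^{2}}\, t^2\,|\widehat\psi(\mathcal R_{\lambda,\theta}^T\vec k)|^2\Bigr)d\vec k
\]
(the $t^{-2}$ coming from the Jacobian of the lattice inversion, cancelled by the $\lambda^4$ weight absorbed into the normalization), so hypothesis (a) gives a positive lower bound and (b) a finite upper bound on the essential support. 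The off-diagonal terms, indexed by $\vec n\in\Gamma_t^\ast\setminus\{0\}$, are estimated by Cauchy--Schwarz in the standard way, producing an error term controlled by
\[
\sum_{\vec n\neq 0}\bigl[\alpha(\vec b(\vec n))\,\alpha(-\vec b(\vec n))\bigr]^{1/2}\int|\widehat f|^2,
\]
where $\vec b(\vec n)$ are points whose moduli grow linearly in $\|\vec n\|$ once $\beta_0,\beta_1$ are fixed; hypothesis (c) gives absolute summability of this series and, crucially, makes it tend to $0$ as $\beta_0,\beta_1\to 0$.

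Thus I obtain bounds of the form
\[
\Bigl[s(\lambda,L,\psi)-R(\beta_0,\beta_1)\Bigr]\|f\|^2 \le \sum_{tlm_0m_1}|S_{tlm_0m_1}|^2
\le \Bigl[S(\lambda,L,\psi)+R(\beta_0,\beta_1)\Bigr]\|f\|^2,
\]
with $R(\beta_0,\beta_1)\to 0$ as $\beta_0,\beta_1\to 0^+$. Defining $\beta_0^c,\beta_1^c$ to be the suprema of those $\beta_0,\beta_1$ for which $R(\beta_0,\beta_1)<s(\lambda,L,\psi)$ proves part (1). For the sharpness statement (2), I pick test functions $f$ whose Fourier transforms are concentrated near frequencies at which the off-diagonal interference reconstructs destructively with the diagonal term (a standard choice is a narrow bump near a point where $\widehat\psi$ and its lattice translates overlap), so that the lower frame inequality fails for $\beta_j$ slightly larger than $\beta_j^c$.

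The main obstacle is the off-diagonal estimate: because here the angular step $\Delta\theta=2\pi/(tL)$ and the translation vectors $\vec b_{tlm_0m_1}$ both depend on $t$, the dual lattice $\Gamma_t^\ast$ varies with $t$. Hence I cannot directly invoke the fixed-lattice Poisson argument of \cite{Van}; I have to group the $(t,l)$ sum carefully so that, after bounding by $\alpha$, summability over $t$ survives. This is precisely where hypothesis (c) (decay of $\alpha$ faster than $|\vec b|^{-1-\epsilon}$) is used, and verifying that the constants $\beta_0^c,\beta_1^c$ produced in this way are genuinely positive is the delicate analytic point.
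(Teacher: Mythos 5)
Your proposal follows essentially the same route as the paper: the paper's own proof of this theorem is a one-line deferral to the proof of Theorem 6.1 (the quaternionic analogue), and that proof is exactly your Plancherel--Poisson decomposition into a diagonal term controlled by hypotheses (a)--(b) and an off-diagonal term controlled by Cauchy--Schwarz together with the summability coming from hypothesis (c), yielding frame bounds of the form $\frac{\lambda^{4}}{\beta_0\beta_1}\bigl(s(\lambda,L,\psi)\mp E(\beta_0,\beta_1)\bigr)\Vert \widehat f\Vert^2$ with $E\to 0$ as $\beta_0,\beta_1\to 0$. If anything, you are more explicit than the paper about two points it glosses over, namely the $t$-dependence of the dual lattice induced by $\vec b_{tlm_0m_1}=t\,\mathcal R_{\lambda,\theta}\vec u_{m_0m_1}$ in the Poisson step, and the sharpness claim in part (2), which the paper's argument never actually addresses.
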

\begin{proof}Proof follows similar to the proof of Theorem (\ref{Qd}).
\end{proof}
\section{Quaternionic continuous wavelet transforms and its discretization}
In order to understand the discretization better and for the sake of completeness, we shall quote some results from \cite{AT}. Further, it should be declared that the discretization procedure described in \cite{Van} acts as a basis for the discretization of quaternionic continuous wavelet transforms.
\subsection{The quaternionic wavelet group}
Consider the action of $\quat^*$ on $\quat$ by left (or right) quaternionic (in our
representation, matrix) multiplication. It is clear that there are only two orbits
under this action,
$\{\bfrako\}$ (the zero quaternion) and $\quat^*$. Furthermore, this latter orbit is
{\em open and free\/} in the usual sense (see, e.g. \cite{Ali}, for a definition). Let
$$ \bfraka = \begin{pmatrix} w_1 & -\overline{w}_2\\ w_2 & \overline{w}_1 \end{pmatrix}
\in \quat^* \quad \text{and} \quad
  \bfrakx = \begin{pmatrix} z_1 & -\overline{z}_2\\ z_2 & \overline{z}_1 \end{pmatrix}
  \in \quat .$$
Then under left action
\be \bfrakx \longmapsto \bfrakx^\prime = \bfraka \bfrakx
  = \begin{pmatrix} w_1 z_1 - \overline{w}_2 z_2 & -
  \overline{w}_2\overline{z}_1 -w_1\overline{z}_2 \\ w_2 z_1 + \overline{w}_1 z_2 &
     \overline{w}_1 \overline{z}_1  - w_2 \overline{z}_2\end{pmatrix} .
\label{left-act}
\en
Note that this is compatible with  the action
$$ \begin{pmatrix} z_1 \\z_2\end{pmatrix} \longmapsto \begin{pmatrix} w_1 & -\overline{w}_2\\ w_2 & \overline{w}_1 \end{pmatrix} \begin{pmatrix} z_1 \\z_2\end{pmatrix}, $$
of the affine $SU(2)$ group (i.e., the group $\mathbb R^{>0}\times SU(2)$) on $\mathbb C^2$.
We write $w_1 =a_0 + ia_3, \; w_2 = a_2 + ia_1$ and $z_1 = x_0 + ix_3, \; z_2 =
x_2 + ix_1$ and consider $\bfrakx$ as the vector
\be
  \bx = \begin{pmatrix} x_0 \\x_3 \\ x_2 \\x_1 \end{pmatrix} \in \mathbb R^4 .
\label{vec-rep}
\en
On this vector, the left action (\ref{left-act}) is easily seen to lead to the matrix
left action
\be
  \bx \longmapsto \bx^\prime = A\bx = \begin{pmatrix} a_0 & -a_3 & -a_2 & -a_1 \\
                                                       a_3 & a_0 & a_1 & -a_2 \\
                                                       a_2 & -a_1 & a_0 & a_3 \\
                                                       a_1 & a_2 & -a_3 & a_0 \end{pmatrix}
                                                        \begin{pmatrix} x_0 \\x_3 \\ x_2 \\x_1 \end{pmatrix}
 = \begin{pmatrix} A_1 & -A_2^T \\ A_2 & A_1^T\end{pmatrix}
            \begin{pmatrix} \bx_1 \\ \bx_2 \end{pmatrix},
 \label{mat-left-act}
\en
on $\mathbb R^4$, where
$$
 A_1 = \begin{pmatrix} a_0 & -a_3 \\a_3 & a_0 \end{pmatrix}, \quad
  A_2 = \begin{pmatrix} a_2 & -a_1 \\a_1 & a_2 \end{pmatrix}, \quad
  \bx_1 = \begin{pmatrix} x_0 \\ x_3 \end{pmatrix}, \quad
  \bx_2 = \begin{pmatrix} x_2 \\ x_1 \end{pmatrix}. $$
 The matrices $A_1$ and $A_2$ are rotation-dilation matrices, and may be written in the form
 \be
    A_1 = \lambda_1 \begin{pmatrix} \cos\theta_1 & - \sin\theta_1\\
                                    \sin\theta_1 &  \cos\theta_1 \end{pmatrix}
    = \lambda_1 R(\theta_1), \qquad
    A_2 = \lambda_2 \begin{pmatrix} \cos\theta_2 & - \sin\theta_2 \\
                                    \sin\theta_2 &  \cos\theta_2 \end{pmatrix}
    = \lambda_2 R(\theta_2)
 \label{rot-dil-mat}
 \en
 where
 \be
  \theta_1 = \tan^{-1}\left(\frac {a_3}{a_0}\right), \;
  \theta_2 = \tan^{-1}\left(\frac {a_1}{a_2}\right), \;
    \lambda_1 = \sqrt{a_0^2 + a_3^2}, \;  \lambda_2 = \sqrt{a_1^2 + a_2^2} \; \text{and} \;
  \lambda_1^2 + \lambda_2^2 \neq 0
 \label{rot-dil-cond}
 \en
 and $R(\theta)$ is the $2\times 2$ rotation matrix
 \be
 R(\theta) = \begin{pmatrix} \cos\theta & -\sin\theta \\ \sin\theta & \cos\theta
   \end{pmatrix} .
 \label{rot-mat}
 \en
 Note that
 $$ A^T A = A A^T  = \vert\bfraka\vert^2 \mathbb I_4\; \quad \text{and} \quad
 \text{det}[A] = \vert\bfraka\vert^4 .$$

 From the above it is clear that when $\quat$ is identified with $\mathbb R^4$, the action of $\quat^*$ on $\quat$ is that of two
 two-dimensional rotation-dilation groups (rotations of the two-dimensional plane together
 with radial dilations, where at least one of the dilations is non-zero) acting on $\mathbb R^4$. Consequently, we shall consider elements in $\quat^*$  as $4\times 4$ real matrices of the type $A$ in (\ref{mat-left-act}):
\be
 A_{\lambda,\theta} = \begin{pmatrix} \lambda_1 R(\theta_1) & -\lambda_2 R(-\theta_2)\\
                        \lambda_2 R(\theta_2) & \lambda_1 R(-\theta_1), \end{pmatrix}, \qquad
\text{det}[A] = \vert\bfraka\vert^4  = [\lambda_1^2 + \lambda_2^2]^2\neq 0.
\label{second-mat-rep}
\en
The matrix $A$ has the inverse
$$
  A^{-1} = \frac 1{\lambda_1^2 + \lambda_2^2} \begin{pmatrix} \lambda_1 R(-\theta_1) & \lambda_2 R(-\theta_2)\\
                       - \lambda_2 R(\theta_2) & \lambda_1 R(\theta_1), \end{pmatrix} . $$

\subsection{A little bit of all three affine groups}\label{sec-quat-aff-grp}
 Let us look at the three affine groups, $G^{\mathbb R}_{\text{aff}}, G^{\mathbb C}_{\text{aff}}$
 and $G^{\mathbb H}_{\text{aff}}$, of the real line, the complex plane and the quaternions,
 respectively. These groups are defined as the semi-direct products
 $$
   G^{\mathbb R}_{\text{aff}} = \mathbb R \rtimes \mathbb R^* , \qquad
   G^{\mathbb C}_{\text{aff}} = \mathbb C \rtimes \mathbb C^* , \qquad
   G^{\mathbb H}_{\text{aff}} = \mathbb H \rtimes \mathbb H^* . $$
Let $\mathbb K$ denote any one of the three fields $\mathbb R, \mathbb C$ or $\mathbb H$ and write $G^{\mathbb K}_{\text{aff}} = \mathbb K \rtimes \mathbb K^*$. A generic element in
$G^{\mathbb K}_{\text{aff}}$ can be written as
$$ g = (b, a) = \begin{pmatrix} a & b \\ 0 & 1 \end{pmatrix}, \quad a \in \mathbb K^* , \;\;
    b \in \mathbb K .$$
Of these, $G^{\mathbb R}_{\text{aff}}$ is the {\em one-dimensional wavelet group} and
$G^{\mathbb C}_{\text{aff}}$, which is isomorphic to the similitude group of the plane (translations, rotations and dilations of the 2-dimensional plane), is the {\em two-dimensional
wavelet group\/.} By analogy  the quaternionic affine group
$G^{\mathbb H}_{\text{aff}}$ is called the {\em quaternionic wavelet group\/} \cite{AT}.

From the general theory of semi-direct products of the type $\mathbb R^n \rtimes H$, where
$H$ is a subgroup of $GL(n, \mathbb R)$, and which has open free orbits in the dual of
$\mathbb R^n$, (see, for example, \cite{Ali}, Chapter 8), it is known that
$G^{\mathbb H}_{\text{aff}}$ has exactly one unitary irreducible representation
on a complex Hilbert space and
moreover, this representation is square-integrable.  Let us see the construction of this
representation (in a Hilbert space over the complexes). Consider the Hilbert space
$\h_{\mathbb C} = L^2_{\mathbb C} (\mathbb R^4, d\bx )$ and define on it the
representation $G^{\mathbb H}_{\text{aff}} \ni (\bb, A) \longmapsto U_{\mathbb C}(\bb , A)$,
\be
   (U_{\mathbb C}(\bb , A) f)(\bx ) = \frac 1{(\text{det}[A])^{\frac 12}}
   f( A^{-1} (\bx - \bb)) ,
   \qquad f \in \h_{\mathbb C} .
\label{comp-quat-rep}
\en
This representation is unitary, irreducible and square-integrable. From the
general theory \cite{Ali}, the {\em Duflo-Moore operator $C$}  is given in the
Fourier domain as the multiplication operator
\be
   (\widehat{C}\widehat{f})(\bk ) = \mathcal C (\bk )
                     \widehat{f} (\bk), \quad \text{where} \quad
   \mathcal C (\bk ) = \left[\frac {2\pi}{\Vert \bk\Vert}  \right]^2 .
\label{duflo-moore-op}
\en
A vector $f\in \h_{\mathbb C}$ is {\em admissible} if it is in the domain of $C$
i.e., if its Fourier transform $\widehat{f}$ satisfies
$$  (2\pi)^4\int_{\mathbb R^4} \frac {\vert \widehat{f}(\bk )\vert^2}
            {\Vert \bk \Vert^4} \; d\bk < \infty. $$

The above representation could also be realized on the Hilbert space
$\kc = L^2_{\mathbb C}(\quat , d\bfrakx )$ over the quaternions by simply transcribing
Eq. (\ref{comp-quat-rep}) into this framework.
Thus,  define the
representation $G^{\mathbb H}_{\text{aff}} \ni (\bfrakb, \bfraka) \longmapsto
U_{\mathbb C}(\bfrakb , \bfraka)$,
\be
   (U_{\mathbb C}(\bfrakb , \bfraka) f)(\bfrakx ) = \frac 1{\text{det}[\bfraka]}
   f( \bfraka^{-1} (\bfrakx - \bfrakb)) ,
   \qquad f \in \kc ,
\label{comp-quat-quat-rep}
\en
The {\em Duflo-Moore operator $C$}  is given in the
Fourier domain as the multiplication operator
\be
   (\widehat{C}\widehat{f})(\bfrakk ) = \mathcal C (\bfrakk )
                     \widehat{f} (\bfrakk), \quad \text{where} \quad
   \mathcal C (\bfrakk ) = \left[\frac {2\pi}{\vert \bfrakk\vert}  \right]^2 .
\label{quat-duflo-moore-op}
\en
The admissibility condition is now
$$  (2\pi)^4\int_{\mathbb R^4} \frac {\vert \widehat{f}(\bfrakk )\vert^2}
            {\vert \bfrakk \vert^4} \; d\bfrakk < \infty. $$
For further details we refer the reader to \cite{AT}.
\subsection{ Discretization of quaternionic wavelet transform }
For the dilations, we choose   the scale $ 0< \lambda_1=\lambda_{01} < 1$.  For rotations consider the matrix  $$ \mathcal{R}_{\lambda_1, \theta_1} =  \begin{pmatrix}
 \lambda_1  \cos( \theta_1) & - \lambda_1 \sin(\theta_1)\\
  \lambda_1 \sin(\theta_1) & \lambda_1 \cos(\theta_1) \\
\end{pmatrix} .$$\\ Let $L_1>1$  be fixed and for $t \in \mathbb{Z}^{+}, \ \ \Delta\theta_1= \frac{2\pi}{tL_1}, \ \  \theta_1(t):= l\Delta\theta_1 =\frac{l2\pi}{tL}, \ \  l \in \mathbb{Z}_{t L_1}=\{0, 1, 2, 3, ..., tL_1-1\}$
 and $\lambda_1 := t \lambda_{01}, $ for some fixed $ 1>\lambda_{01}>0.$ For the translations, we have
 $ \vec{b}= \vec{b}_{tlm_0m_1}=t \mathcal{R}_{\lambda_1, \theta_1}(\vec{u}_{m_0m_1}),$ with $\vec{u}_{m_0m_1}=(m_0\beta_0, m_1\beta_1), \ m_0, m_1 \in \mathbb{Z},$ and for some $\beta_0, \beta_1\geq 0.$\\

\begin{figure}[h]
\begin{center}
\subfigure[$L_1=2 \ and  \ t=1, 2, 3, 4, 5. $]{
\includegraphics[width=6cm]{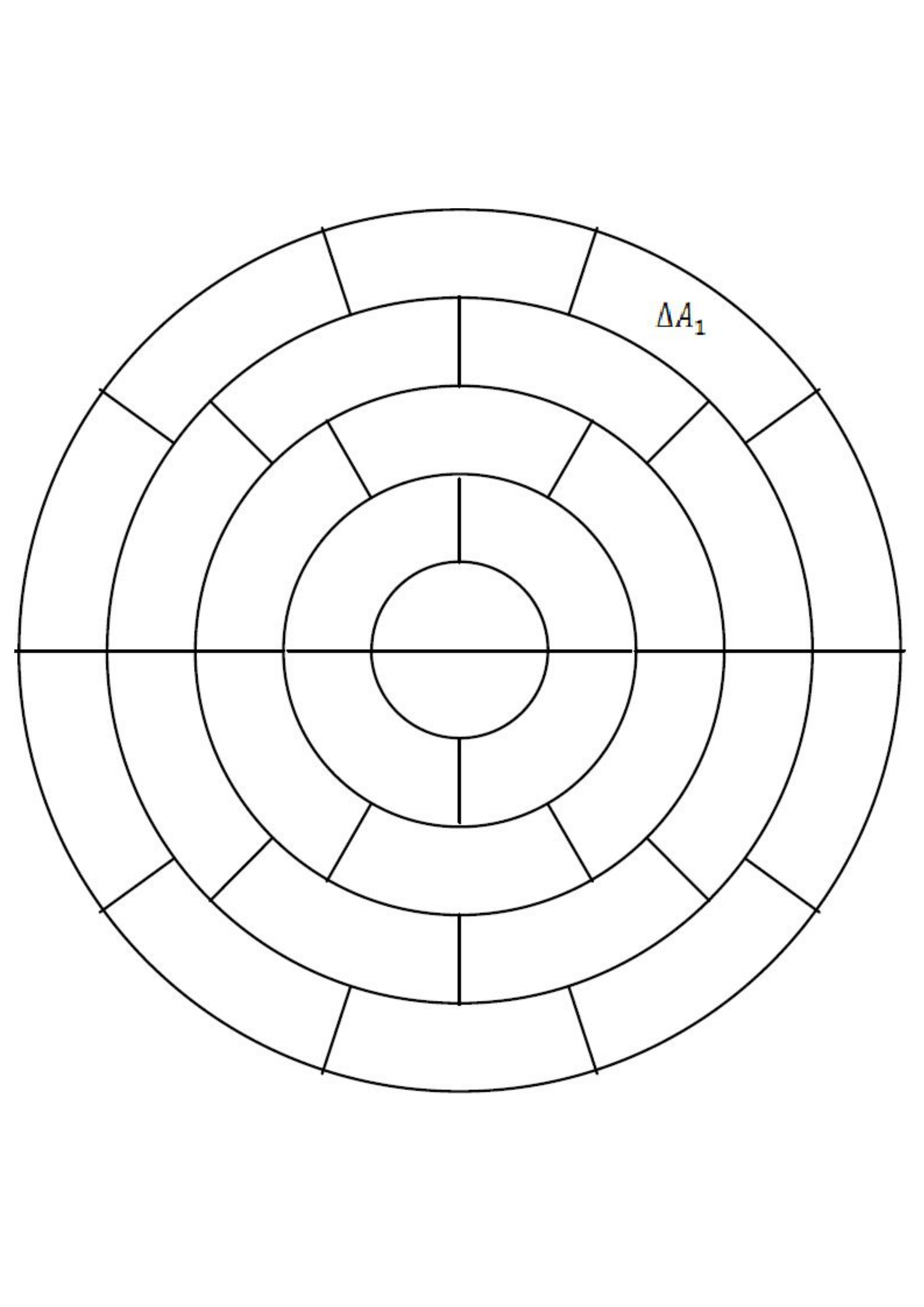}
}
\subfigure[$L_1=3 \ and  \ t=1, 2, 3, 4, 5. $]{
\includegraphics[width=6cm]{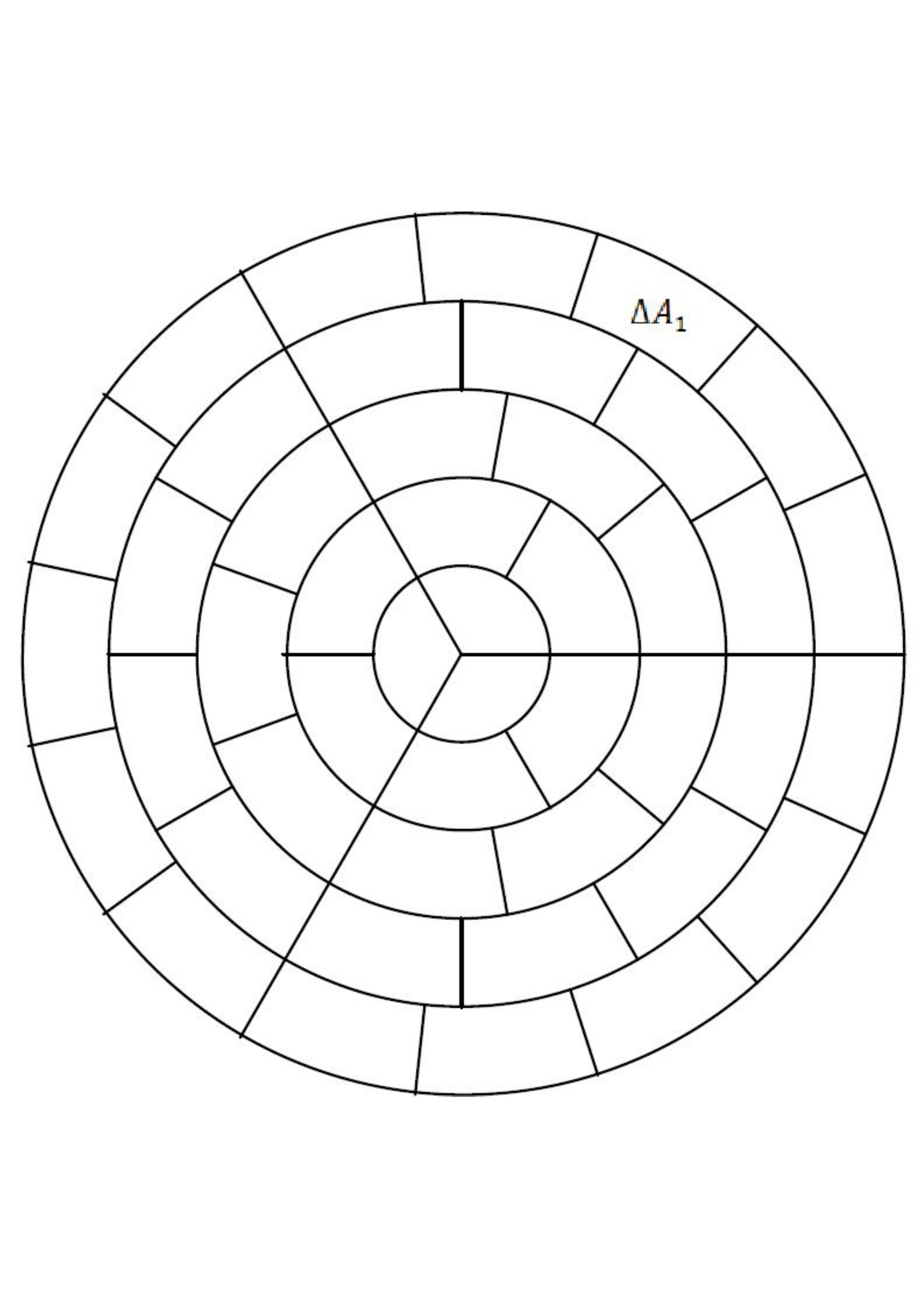}
}
\caption{$\Delta A_1$ between annulus. }
\label{1.pdf}
\end{center}
\end{figure}

To estimate the area of $\Delta A_1 $ in  Figure 2 we choose $ 0< \lambda_1= \lambda_{01}<1,$ then in step  $t$ we have $r= t\lambda_1, \Delta\theta_1= \frac{2\pi}{tL_1},\frac{(l-1)2\pi}{tL_1}\leq \theta=\theta(t)\leq \frac{l2\pi}{tL_1}$ and $ \Delta r=\lambda_1.$
 In fact, for fixed $L_1>1$   in step $ t $  we divide the annulus, between two circles of radius $ (t-1)\lambda_1 $ and  $ t\lambda_1,$ into $tL_1$ equal parts and the exact value of the area of $\Delta A_1 $ is equal to $\frac{\pi (2t-1)\lambda_1^2}{tL_1}.$  So $\Delta A_1 = \frac{\pi \lambda_1^2}{L_1}$  for  $t=1,$  and $\frac{\pi \lambda_1^2}{L_1}<\Delta A < \frac{3\pi \lambda_1^2}{2L_1}$ for $ t> 1$ and $\Delta A \rightarrow \frac{\pi \lambda_1^2}{L_1}$ as  $ t \rightarrow +\infty.$    In all steps we need the area $\Delta A_1 $ to be less than a constant, say  $\eta_1.$ In the first step we choose $\lambda_1$ so small and $ L_1$ so large  such that $\Delta  A_1 = \frac{3\pi\lambda_1^2}{2L_1} <\eta_1.$
 The above argument  for $\lambda_{01}, \lambda_1, \theta_1, L_1, t, l, m_0, m_1, \beta_0, \beta_1, \Delta A_1, \Delta\theta_1 \  \mbox{and} \  \eta_1$,   works also for $\lambda_{02}, \lambda_2, \theta_2, L_2, j, k, m_2, m_3, \beta_2, \beta_3, \Delta A_2, \Delta\theta_2  \ \mbox{and} \  \eta_2,$  as well.
 Using a similar argument as in \cite{Van} and the above paragraphs we obtain the following matrix
$$A_{\lambda\theta}=  \begin{pmatrix}
   \mathcal{R}_{\lambda_1, \theta_1} & -\mathcal{R}_{\lambda_2,  -\theta_2} \\ \mathcal{R}_{\lambda_2, \theta_2} &  \mathcal{R}_{\lambda_1,  -\theta_1} \\
\end{pmatrix}. $$
 For discretization we will choose the sampling grid $ \Lambda$ as follows:
\begin{eqnarray*}
\Lambda &=& \Lambda (\lambda_1, \lambda_2, L_1, L_2, \beta_0, \beta_1, \beta_2, \beta_3)\\
&=&\{ (t\lambda_1, j\lambda_2, l\frac{ 2\pi}{tL_1}, k\frac{ 2\pi}{jL_2}, \vec{b}_{tjlkm}); ((t, j), l,k,m)\in \mathbb{Z}^2 \times \mathbb{Z}_{tL_1} \times \mathbb{Z}_{jL_2} \times \mathbb{Z}^4 \},
\end{eqnarray*}
where $ \mathbb{Z}_{aL}=\{0, 1, 2, 3,,..., aL-1\}$,
  $\vec{b}_m=\vec{b}_{tjlkm_0m_1m_2m_3}.$ With wavelet coefficients $$\{S_{tjlkm_0m_1m_2m_3}: (t, j, l, k, m_0, m_1, m_2, m_3 )\in  \mathbb{Z}^2 \times \mathbb{Z}_{tL_1} \times \mathbb{Z}_{jL_2} \times \mathbb{Z}^4 \}$$
\begin{eqnarray*}
S_{tjlkm_0m_1m_2m_3}= S(\vec{b}_m, \lambda_1^{-t}, \lambda_2^{-j}, l
\theta_{01}, k \theta_{02} )
 = < \psi_{A_{\lambda\theta}, \vec{b}_m}|s>
& =\int_{\mathbb{R}^4}{d^4\vec{x}\overline{\psi_{A_{\lambda\theta}, \vec{b}_m}(\vec{x})}s(\vec{x})}.	
\end{eqnarray*}
As usual the Fourier transform of $A_{\lambda\theta}-$dilated and $\vec{b_m}-$ translated function $\psi$ is
$$ \widehat{\psi}_{A_{\lambda\theta},\vec{b_m}}(\vec{x})=\frac{1}{(2\pi)^2}\int_{\mathbb{R}^4}e^{-i\vec{x} \dot{.} \vec{y}} \frac{1}{\lambda_1^2+\lambda_2^2}\psi(A^{-1}_{\lambda\theta}(\vec{y}-\vec{b_m}))d\vec{y}.$$ Using  change of variables we have
$$ \widehat{\psi}_{A_{\lambda\theta},\vec{b_m}}(\vec{x})=\frac{\lambda^2_1+\lambda^2_2}{(2\pi)^2}e^{-i\vec{x}\dot{.}\vec{b_m}}\int_{\mathbb{R}^4}e^{-iA_{\lambda\theta}^T\vec{x}\dot{.} \vec{z}}\psi(\vec{z})d\vec{z}=\frac{\lambda_1^2+\lambda_2^2}{(2\pi)^2}e^{-i\vec{x}\dot{.}\vec{b_m}}\widehat{\psi}(A_{\lambda\theta}^T \vec{x}). $$
Our task is now to find conditions on the grid  $\Lambda$  as defined above such that  $$\{\psi_{tjlkm}: ((t, j), l, k, m )\in  \mathbb{Z}^2 \times \mathbb{Z}_{tL_1} \times \mathbb{Z}_{jL_2} \times \mathbb{Z}^4 \}$$
is a frame. We will use the following notation
$\mathcal{L}_{tj}=\mathbb{Z}_{tL_1}\times \mathbb{Z}_{jL_2}.$ The  proof of the following theorem is more or less an adaptation, to our new grid, of the proof given in \cite{Van}.
\begin{theorem}\label{Qd}
Let $\psi$ be a quaternionic wavelet satisfying the following conditions:
\begin{enumerate}
\item[(a)] $\displaystyle s(\lambda_1, \lambda_2, L_1, L_2, \psi)=essinf_{\vec{k}\in \mathbb{R}^4} \sum_{t, j \in \mathbb{Z}}\sum_{(l, k)\in \mathcal{L}_{tj}} |\widehat{\psi}(A^T_{\lambda\theta}\vec{k})|^2>0.$
\item[(b)]$\displaystyle  S(\lambda_1, \lambda_2, L_1, L_2, \psi)=\sup_{\vec{k}\in \mathbb{R}^4} \sum_{t, j \in \mathbb{Z}}\sum_{(l, k)\in \mathcal{L}_{tj}} |\widehat{\psi}(A^T_{\lambda\theta}\vec{k})|^2 < \infty.$
\item[(c)]$\displaystyle  \sup_{\vec{b}\in \mathbb{R}^4}(1+|\vec{b}|)^{1+\epsilon}\alpha(\vec{b})<\infty,$  where\\
$\displaystyle \alpha(\vec{b})= \sup_{\vec{k}\in \mathbb{R}^4} \sum_{t, j\in \mathbb{Z}}\sum_{(l, k)\in \mathcal{L}_{tj}}| \hat{\psi}(A^T_{\lambda\theta}\vec{k}+\vec{b})|| \hat{\psi}(A^T_{\lambda\theta}\vec{k})|.$
\end{enumerate}
Then there exist constants $\beta_0^c, \beta_1^c, \beta_2^c, \beta_3^c >0 $ such that
\begin{enumerate}
 \item[(1)] $\displaystyle\forall \beta_0 \in (0, \beta_0^c), \beta_1 \in (0, \beta_1^c), \beta_2 \in (0, \beta_2^c), \beta_3 \in (0, \beta_3^c)$, the    family $\displaystyle\psi_{tjlkm_0m_1m_2m_3}$ associated with $\displaystyle(\lambda_1, \lambda_2, L_1, L_2, \beta_0, \beta_1, \beta_2, \beta_3)$  is a frame of $\displaystyle L^2(\mathbb{R}^4, d^4\vec{x}).$
\item[(2)] $\displaystyle\forall \delta>0$, there exist $\displaystyle\beta_0 \in (\beta_0^c, \beta_0^c + \delta), \beta_1 \in (\beta_1^c, \beta_1^c + \delta), \beta_2 \in (\beta_2^c, \beta_2^c + \delta), \beta_3 \in (\beta_3^c, \beta_3^c + \delta),  $  such that the family  $\displaystyle\psi_{tjlkm_0m_1m_2m_3}$ associated with $\displaystyle(\lambda_1, \lambda_2, L_1, L_2, \beta_0, \beta_1, \beta_2, \beta_3)$
is not a frame of $\displaystyle L^2(\mathbb{R}^4, d^4\vec{x}).$
\end{enumerate}
\end{theorem}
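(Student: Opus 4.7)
The plan is to follow the Daubechies--Vandergheynst scheme for wavelet frames, adapted to the refined four-dimensional grid $\Lambda$ introduced above. The key idea is: on the Fourier side, do a change of variables that removes $A^T_{\lambda\theta}$ from the argument of $\widehat{\psi}$; then apply the four-dimensional Poisson summation formula over the translation indices $(m_0,m_1,m_2,m_3)\in\mathbb{Z}^4$; this splits $\sum_{tjlkm}|\langle\psi_{tjlkm},f\rangle|^2$ into a diagonal piece controlled by hypotheses (a) and (b) together, and an off-diagonal remainder controlled by hypothesis (c).

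First, using the identity
$$\widehat{\psi}_{A_{\lambda\theta},\vec{b}_m}(\vec{x})=\frac{\lambda_1^2+\lambda_2^2}{(2\pi)^2}\,e^{-i\vec{x}\cdot\vec{b}_m}\,\widehat{\psi}(A^T_{\lambda\theta}\vec{x})$$
derived above, I would write $\langle\psi_{tjlkm},f\rangle$ in the Fourier domain and, for fixed scale indices $(t,j,l,k)$, substitute $\vec{y}=tA^T_{\lambda\theta}\vec{x}$. The exponential becomes $e^{-i\vec{y}\cdot\vec{u}_m}$ with $\vec{u}_m=(m_0\beta_0,m_1\beta_1,m_2\beta_2,m_3\beta_3)$, and summing $|\langle\psi_{tjlkm},f\rangle|^2$ over $m$ is then a textbook Poisson-summation computation on the lattice $(\beta_0\mathbb{Z})\times\cdots\times(\beta_3\mathbb{Z})$, producing a sum over the dual lattice of spacings $2\pi/\beta_i$. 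The $\vec{l}=0$ term furnishes the diagonal contribution
$$\frac{(2\pi)^4}{\beta_0\beta_1\beta_2\beta_3}\int_{\mathbb{R}^4}|\widehat{f}(\vec{k})|^2\sum_{t,j}\sum_{(l,k)\in\mathcal{L}_{tj}}|\widehat{\psi}(A^T_{\lambda\theta}\vec{k})|^2\,d\vec{k},$$
which by hypotheses (a) and (b) lies between $\frac{(2\pi)^4}{\beta_0\beta_1\beta_2\beta_3}\,s\,\|f\|^2$ and $\frac{(2\pi)^4}{\beta_0\beta_1\beta_2\beta_3}\,S\,\|f\|^2$.

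The off-diagonal terms ($\vec{l}\neq 0$) are handled by a Cauchy--Schwarz step (symmetrizing in $\vec{k}\leftrightarrow\vec{k}+\vec{q}(\vec{l})$) and bounded in absolute value by
$$\frac{(2\pi)^4}{\beta_0\beta_1\beta_2\beta_3}\,\|f\|^2\sum_{\vec{l}\in\mathbb{Z}^4\setminus\{0\}}\bigl[\alpha(\vec{q}(\vec{l}))\,\alpha(-\vec{q}(\vec{l}))\bigr]^{1/2},$$
where $\vec{q}(\vec{l})$ runs over the dual-lattice points. The polynomial-decay hypothesis (c) makes this series convergent, and since the smallest dual-lattice spacing is $2\pi/\max_i\beta_i$, the entire sum is a quantity $R(\beta_0,\beta_1,\beta_2,\beta_3)$ that tends to $0$ as $\beta_i\to 0^+$. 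Putting the two estimates together gives the two-sided frame inequality
$$\frac{(2\pi)^4}{\beta_0\beta_1\beta_2\beta_3}\bigl[s-R(\beta)\bigr]\|f\|^2\leq\sum_{tjlkm}|\langle\psi_{tjlkm},f\rangle|^2\leq\frac{(2\pi)^4}{\beta_0\beta_1\beta_2\beta_3}\bigl[S+R(\beta)\bigr]\|f\|^2.$$
By (a), $s>0$, so the lower bound is strictly positive whenever $R(\beta)<s$; this defines critical values $\beta_0^c,\beta_1^c,\beta_2^c,\beta_3^c$ (the supremum of $\beta_i$ for which this holds with the other three fixed), proving claim (1). For claim (2), I would produce, for any $\delta>0$, an explicit test function whose Fourier transform concentrates on a thin shell where the periodization creates destructive interference at the enlarged lattice spacing $2\pi/(\beta_i^c+\delta)$, making the lower frame bound collapse.

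The main obstacle will be the bookkeeping in the Poisson-summation step: because $\vec{b}_m=tA_{\lambda\theta}\vec{u}_m$ couples the scale index $t$, the rotation--dilation matrix $A_{\lambda\theta}$, and the lattice $\vec{u}_m$, the effective dual lattice depends on $(t,j,l,k)$, and one must verify that the Jacobian $t^4(\lambda_1^2+\lambda_2^2)^2$ arising from the change of variables combines correctly with the normalizing factor $(\lambda_1^2+\lambda_2^2)/(2\pi)^2$ in $\widehat{\psi}_{A_{\lambda\theta},\vec{b}_m}$ so that the $t$-dependence is absorbed exactly (this is, in fact, the analytic reason the refined grid of Section~5 removes the angular-expansion defect). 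Once this is checked, the estimate of $R(\beta)$ is a straightforward four-dimensional adaptation of the two-dimensional argument in \cite{Van}, with hypothesis (c) playing the role of the decay condition used there.
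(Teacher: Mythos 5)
Your proposal follows essentially the same route as the paper's proof: rewrite the coefficients in the Fourier domain, apply Poisson summation over the translation indices, split the resulting sum into the $m=0$ diagonal term (bounded above and below via hypotheses (a) and (b)) and an off-diagonal remainder controlled by a double Cauchy--Schwarz estimate and the correlation function $\alpha$ of hypothesis (c). If anything you are slightly more careful than the paper: your remainder $\sum_{\vec{l}\neq 0}\bigl[\alpha(\vec{q}(\vec{l}))\,\alpha(-\vec{q}(\vec{l}))\bigr]^{1/2}$ carries the square root that the Cauchy--Schwarz step actually produces (the paper writes the bare product), and you explicitly flag the $t$-dependence of the effective dual lattice in the Poisson step and sketch an argument for claim (2), points the paper's proof passes over in silence.
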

\begin{proof}
We are looking for conditions on eight parameters $\lambda_1, \lambda_2, L_1, L_2, \beta_0, \beta_1, \beta_2, \beta_3 $ for which there exist $0< A\leq B<\infty$ such that
 $$ A||f||^2\leq \sum_{(t,j)\in \mathbb{Z}^2} \sum_{(l, k)\in \mathcal{L}_{tj}}\sum_{m\in \mathbb{Z}^4}|<\psi_{A_{\lambda\theta}, \vec{b_m}}|f>|^2\leq B||f||^2.$$
 For the middle term in the above formula we have
\begin{eqnarray*}
K &=& \sum_{(t,j)\in \mathbb{Z}^2} \sum_{(l, k)\in \mathcal{L}_{tj}}\sum_{m\in \mathbb{Z}^4}|\left\langle \psi_{A_{\lambda\theta}, \vec{b_m}}  | f \right\rangle|^2 \\
&=& \sum_{(t,j)\in \mathbb{Z}^2} \sum_{(l, k)\in \mathcal{L}_{tj}}\sum_{m\in \mathbb{Z}^4}
\frac{(\lambda_1^2+\lambda_2^2)^2}{(2\pi)^4}
 \int_{\mathbb{R}^4}d^4\vec{k}
 \int_{\mathbb{R}^4}{d^4\vec{k}^{'}} {e^{ib_m.(\vec{k}-\vec{k}^{'} )}} \widehat{\psi}(A_{\lambda\theta}^T\vec{k}^{'}) \overline{\widehat{\psi}(A_{\lambda\theta}^T\vec{k})}  \overline{\hat{f}(\vec{k^{'}})}\hat{f}(\vec{k})
  \end{eqnarray*}
Using the Poisson formula , for $d=0, 1, 2, 3, $ we have
\begin{eqnarray*}
\sum_{m_d=-\infty}^{+\infty}{e^{i 2\pi m_d\frac{\beta_d}{2\pi}(k_d-k_d^{'})}}&=&\dfrac{2\pi}{\beta_d}\sum_{m_d=-\infty}^{+\infty} \frac{\beta_d}{2\pi} {e^{i 2\pi m_d\frac{\beta_d}{2\pi}(k_d-k_d^{'})}}
=\dfrac{2\pi}{\beta_d}\sum_{m_d=-\infty}^{+\infty} \delta(k_d-k_d^{'}-m_d\frac{2\pi}{\beta_d})\\
&=&\dfrac{2\pi}{\beta_d}\sum_{m_d=-\infty}^{+\infty} \delta(k_d-k_d^{'}- \tilde{b}_{m_d}),
\end{eqnarray*}
 where $\tilde{b}_{m_d} =m_d \dfrac{2\pi}{\beta_d}. $
 Since $\delta(\vec{k}-\vec{K^{'}}-m)=\prod_{d=0}^3 \delta(k_d-k^{'}_d-m)$, we have
\begin{align}
\nonumber
 \sum_{m\in\mathbb{Z}^4}{e^{i\vec{b}_m . (\vec{k}-\vec{k}^{'})}}
&=  \dfrac{16\pi^4}{\beta_0\beta_1\beta_2\beta_3}\sum_{m\in \mathbb{Z}^6} \delta (\vec{k}-\vec{k}^{'}-\vec{\tilde{b}}_m ),
\nonumber
\end{align}
where $\vec{\tilde{b}}_m =(m_0\frac{2\pi}{\beta_0}, m_1\frac{2\pi}{\beta_1}, m_2\frac{2\pi}{\beta_2}, m_3\frac{2\pi}{\beta_3}). $
Now we have
\nonumber
\begin{align}K  & =
 \dfrac{(\lambda_1^2+\lambda_2^2)^2}{\beta_0\beta_1\beta_2\beta_3} \sum_{(t,j)\in \mathbb{Z}^2} \sum_{(l, k)\in \mathcal{L}_{tj}}\sum_{m\in \mathbb{Z}^4}\int_{\mathbb{R}^4}d^4\vec{k} \ \hat{\psi}(A^T_{\lambda\theta}(\vec{k}-\vec{\tilde{b}}_m))\  \overline{\widehat{\psi}(A^T_{\lambda\theta}\vec{k})} \  \overline{\widehat{f}(A^T_{\lambda\theta}(\vec{k}-{\tilde{b}}_m))} \   \widehat{f}(A^T_{\lambda\theta}\vec{k}).
 \end{align}
 We will split the above sum as $K=P+Q$, where
 $P$  denotes the term with $ m=(m_0, m_1, m_2, m_3)=(0, 0, 0, 0)$ and $Q$ the rest. In the following we set $ \mathbb{Z}^4_*=\mathbb{Z}^4 \setminus (0, 0, 0, 0).$
\begin{eqnarray*}
K&=& P+Q
= \dfrac{(\lambda_1^2+\lambda_2^2)^2}{\beta_0\beta_1\beta_2\beta_3}\sum_{(t, j)\in\mathbb{Z}^2} \sum_{(l, k) \in \mathcal{L}_{tj}} \int_{\mathbb{R}^4}d^2\vec{k}|\hat{\psi}(A^T_{\lambda\theta}\vec{k}))|^2|\hat{f}(A^T_{\lambda\theta}\vec{k})|^2 \\
&+&
 \frac{(\lambda_1^2+\lambda_2^2)^2}{\beta_0\beta_1\beta_2\beta_3} \sum_{(t,j)\in \mathbb{Z}^2} \sum_{(l, k)\in \mathcal{L}_{tj}}\sum_{m\in \mathbb{Z}^4_{*}}\int_{\mathbb{R}^4}d^4\vec{k} \ \hat{\psi}(A^T_{\lambda\theta}(\vec{k}-\vec{\tilde{b}}_m))\  \overline{\widehat{\psi}(A^T_{\lambda\theta}\vec{k})} \  \overline{\widehat{f}(A^T_{\lambda\theta}(\vec{k}-{\tilde{b}}_m))} \   \widehat{f}(A^T_{\lambda\theta}\vec{k})
\end{eqnarray*}
Using our hypothesis we get an estimate for each term in the above summation. For the first term we get from $ (a) $ and $ (b) $ 
$$ \frac{(\lambda_1^2+\lambda_2^2)^2}{\beta_0\beta_1\beta_2\beta_3} s(\lambda_1, \lambda_2, L_1, L_2,\psi)||\hat{f}||^2 \leq |P| \leq \frac{(\lambda_1^2+\lambda_2^2)^2}{\beta_0\beta_1\beta_2\beta_3} S(\lambda_1, \lambda_2, L_1, L_2,\psi)||\hat{f}||^2, $$
where     $ s(\lambda_1, \lambda_2, L_1, L_2,\psi)$  and $ S(\lambda_1, \lambda_2, L_1, L_2,\psi)$ are defined in $(a)$ and $(b)$, respectively.
By using the Cauchy-Schwartz inequality in  the second term , we get
\begin{align}
\nonumber
& |Q|\leq \frac{(\lambda_1^2+\lambda_2^2)^2}{\beta_0\beta_1\beta_2\beta_3}\sum_{(t,j)\in \mathbb{Z}^2} \sum_{(l, k)\in \mathcal{L}_{tj}}\sum_{m\in \mathbb{Z}^4_{*}}\int_{\mathbb{R}^4}d^4\vec{k}|\widehat{\psi}(A^T_{\lambda\theta}(\vec{k}-\vec{\tilde{b}}_m))|   |\widehat{\psi}(A^T_{\lambda\theta}\vec{k})||\hat{f}(A^T_{\lambda\theta}\vec{k})| | \hat{f}(A^T_{\lambda\theta}(\vec{k}- \vec{\tilde{b}}_m))|\\
\nonumber
& \leq \frac{(\lambda_1^2+\lambda_2^2)^2}{\beta_0\beta_1\beta_2\beta_3}\sum_{(t,j)\in \mathbb{Z}^2} \sum_{(l, k)\in \mathcal{L}_{tj}}\sum_{m\in \mathbb{Z}^4_{*}}[\int_{\mathbb{R}^4}d^4\vec{k}|\widehat{\psi}(A^T_{\lambda\theta}(\vec{k}-\vec{\tilde{b}}_m))|  |\widehat{\psi}(A^T_{\lambda\theta}(\vec{k}))| |\hat{f}(A^T_{\lambda\theta}\vec{k})|^2]^{\frac{1}{2}}\\
 \nonumber
& \times [\int_{\mathbb{R}^4}d^4\vec{k}|\widehat{\psi}(A^T_{\lambda\theta}(\vec{k}-\vec{\tilde{b}}_m))|  |\widehat{\psi}(A^T_{\lambda\theta}(\vec{k}))| |\hat{f}(A^T_{\lambda\theta}(\vec{k}-\vec{\tilde{b}}_m))|^2]^{\frac{1}{2}}\\
& \leq \frac{(\lambda_1^2+\lambda_2^2)^2}{\beta_0\beta_1\beta_2\beta_3}\sum_{(t,j)\in \mathbb{Z}^2} \sum_{(l, k)\in \mathcal{L}_{tj}}\sum_{m\in \mathbb{Z}^4_{*}}[\int_{\mathbb{R}^4}d^4\vec{k}|\widehat{\psi}(A^T_{\lambda\theta}(\vec{k}-\vec{\tilde{b}}_m))|  |\widehat{\psi}(A^T_{\lambda\theta}(\vec{k}))| |\hat{f}(A^T_{\lambda\theta}\vec{k})|^2]^{\frac{1}{2}}\\
 \nonumber
& \times [\int_{\mathbb{R}^4}d^4\vec{k}|\widehat{\psi}(A^T_{\lambda\theta}\vec{k})|  |\widehat{\psi}(A^T_{\lambda\theta}(\vec{k}+\vec{\tilde{b}}_m))| |\hat{f}(A^T_{\lambda\theta}\vec{k})|^2]^{\frac{1}{2}}\\
\end{align}
The Cauchy-Schwartz inequality for the infinite series on $l, k, t$ and $ j$ yields the following result
\begin{align}
\nonumber
& |Q|\leq \frac{(\lambda_1^2+\lambda_2^2)^2}{\beta_0\beta_1\beta_2\beta_3}\sum_{m\in \mathbb{Z}^4_{*}}[\int_{\mathbb{R}^4}d^4\vec{k}\{\sum_{(t,j)\in \mathbb{Z}^2}\sum_{(l, k)\in \mathcal{L}_{tj}}|\widehat{\psi}(A^T_{\lambda\theta}(\vec{k}-\vec{\tilde{b}}_m))| |\widehat{\psi}(A^T_{\lambda\theta}(\vec{k}))|\} |\hat{f}(A^T_{\lambda\theta}\vec{k})|^2]^{\frac{1}{2}}\\
 \nonumber
& \times [\int_{\mathbb{R}^4}d^4\vec{k}\{\sum_{(t,j)\in \mathbb{Z}^2}\sum_{(l, k)\in \mathcal{L}_{tj}}|\widehat{\psi}(A^T_{\lambda\theta}\vec{k})|  |\widehat{\psi}(A^T_{\lambda\theta}(\vec{k}+\vec{\tilde{b}}_m))|\}|\hat{f}(A^T_{\lambda\theta}\vec{k})|^2]^{\frac{1}{2}}\\
\end{align}
Using $(c)$  we have
$$|Q|\leq \frac{(\lambda_1^2+\lambda_2^2)^2}{\beta_0\beta_1\beta_2\beta_3}\{ \sum_{m\in \mathbb{Z}_{*}^4} \alpha(\vec{\tilde{b}}_m). \alpha(-\vec{\tilde{b}}_m)\}||\widehat{f}||^2.$$
To get the complete upper and lower bounds for $K$, we define
$$ E(\lambda_1, \lambda_2, \beta_0, \beta_1, \beta_2, \beta_3)= \sum_{m\in \mathbb{Z}_{*}^4} \alpha(\vec{\tilde{b}}_m). \alpha(-\vec{\tilde{b}}_m).$$
Using the inequality $|P|-|Q|\leq K \leq |P|+|Q|$
we obtain a lower bound for the left-hand side
$$ \frac{(\lambda_1^2+\lambda_2^2)^2}{\beta_0\beta_1\beta_2\beta_3}\{ s(\lambda_1, \lambda_2, L_1, L_2, \psi)- E(\lambda_1, \lambda_2, \beta_0, \beta_1, \beta_2, \beta_3)\}||\widehat{f}||^2\leq |P|-|Q|, $$
and an upper bound for right-hand side
$$ |P|-|Q|\leq\frac{(\lambda_1^2+\lambda_2^2)^2}{\beta_0\beta_1\beta_2\beta_3}\{ s(\lambda_1, \lambda_2, L_1, L_2, \psi)- E(\lambda_1, \lambda_2, \beta_0, \beta_1, \beta_2, \beta_3)\}||\widehat{f}||^2 . $$
\end{proof}
Using the discretization of the continuous wavelet transform of $\kc$  obtained in the above theorem, Theorem \ref{Qd} and the frame lifting described in section 4, we can write a discretized version of the quaternionic continuous wavelet transform, considered in this manuscript, of $\hquat$ as follows.
\begin{theorem}For $(t,j)\in\Z^2$, $(l,k)\in\mathcal{L}_{tj}$ and $m\in\Z^4$, according to theorem \ref{FLT},
\be
  \bmid \bPhi_{tjlkm}\bket = \begin{pmatrix} \vert \psi_{A_{\lambda\theta}, \vec{b_m}} \rangle&  0\\
                                       0 &  \vert \overline{\psi_{A_{\lambda\theta}, \vec{b_m}}} \rangle
                                        \end{pmatrix},
\label{quat-onbf1}
\en
where $\vert \psi_{A_{\lambda\theta}, \vec{b_m}} \rangle$ is as in Theorem \ref{Qd}, is a frame for $\hquat$ .
\end{theorem}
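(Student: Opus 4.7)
The plan is to read this final theorem as a direct corollary obtained by composing two results already proved in the paper: the scalar (complex-Hilbert-space) frame construction of Theorem \ref{Qd} and the frame lifting principle of Theorem \ref{FLT}. The only real work is to verify that the hypotheses of Theorem \ref{FLT} are satisfied by the complex wavelet family produced in Theorem \ref{Qd}, and to rewrite the multi-index $(t,j,l,k,m)$ as a single index $n$ so that the statements match syntactically.

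First, I would fix parameters $\lambda_1,\lambda_2,L_1,L_2$ together with an admissible quaternionic wavelet $\psi$ satisfying conditions $(a)$--$(c)$ of Theorem \ref{Qd}, and then choose the translation parameters $\beta_0,\beta_1,\beta_2,\beta_3$ in the open intervals $(0,\beta_i^c)$ guaranteed by part $(1)$ of that theorem. By Theorem \ref{Qd}, the family
\[
\{\psi_{A_{\lambda\theta},\vec b_m}\}_{(t,j,l,k,m)\in \mathbb{Z}^2\times \mathcal{L}_{tj}\times \mathbb{Z}^4}
\]
is a frame for $\kc=L^2_{\mathbb C}(\mathbb H,d\bfrakx)\simeq L^2(\mathbb R^4,d^4\vec x)$ with some finite frame bounds $0<A\le B<\infty$.

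Second, I would enumerate this countable index set by a single index $n\in\mathbb N$ and rename $\phi_n:=\psi_{A_{\lambda\theta},\vec b_m}$ so that $\{\phi_n\}_{n=0}^{\infty}$ is a frame for $\kc$ with the same bounds $A,B$. This is the exact input required by Theorem \ref{FLT}. That theorem then guarantees that the block-diagonal lifted family
\[
\bmid \bPsi_n\bket=\begin{pmatrix}\vert\phi_n\rangle & 0\\ 0 & \vert\overline{\phi}_n\rangle\end{pmatrix}
\]
is a frame for $\hquat$ with bounds $A$ and $B$, i.e., for every $\bfrakf\in\hquat$,
\[
A\|\bfrakf\|^2_{\hquat}\le \sum_{n=0}^\infty \bmid\bbra\bfrakf\bmid\bPsi_n\bket\bmid^2\le B\|\bfrakf\|^2_{\hquat}.
\]
Reinstating the original multi-index, these $\bPsi_n$ are precisely the $\bPhi_{tjlkm}$ in the statement, so the claim follows.

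The only delicate points I would flag are bookkeeping rather than mathematical: one must check that the correspondence $n\leftrightarrow (t,j,l,k,m)$ is a genuine bijection of countable sets so that the frame inequality transported through it is unambiguous, and one should note that the order of summation in the lifted frame inequality is legitimate because each term in \eqref{V3} is nonnegative, so Tonelli's theorem allows free rearrangement. No step requires a new estimate beyond what Theorem \ref{Qd} and Theorem \ref{FLT} already supply, so I would expect the proof to be essentially a one-line appeal to these two results, presented just as above.
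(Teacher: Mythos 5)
Your proposal matches the paper's own argument: the paper proves this theorem precisely by citing Theorem \ref{Qd} to obtain the complex frame $\{\psi_{A_{\lambda\theta},\vec b_m}\}$ for $\kc$ and then invoking the diagonal lift $\bmid\bPsi_n\bket$ of Theorem \ref{FLT}, exactly as you do. Your added remarks on re-indexing the countable multi-index set and on the legitimacy of rearranging the nonnegative sums are sound bookkeeping that the paper leaves implicit.
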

\section{conclusion}
We have introduced a general scheme for lifting a basis and a frame to a quaternionic Hilbert space from its real and complex counterparts. However, one should be aware that many more bases and frames exist in the quaternionic Hilbert spaces which are not of the type we discussed in this article. Our main goal, in this article, was to construct a discrete wavelet on a quaternion valued square integrable function space. In doing so, we have refined the existing theory on the complex Hilbert spaces. From the wavelet point of view, we do not consider that we have completed the task of constructing  wavelets in the quaternionic Hilbert spaces. Duplicating the discretization of the real wavelet transform to its complex or quaternion counterparts is still an open problems, and a workable multi-resolution analysis along the lines of the real case is still open. On the engineering side, there were several attempts to duplicate the real theory to complexes but none of them is as sophisticated as the real case (for the real wavelet analysis see, for example \cite{Dau, Ali} and for the $2D$ wavelets, see for example \cite{Van, Ben, Nic} and  many references therein). As a final word: in discretizating the continuous wavelet transform in quaternionic Hilbert spaces, we moved a step further in applying the initially defined continuous wavelet transforms to stereophonic or stereoscopic signals.
We would like to emphasize a fact that: the difficulty in applying the so-obtained discrete frame arises from computing their dual for real life data (even from mathematical point of view). This manuscript may provide motivation to someone to come with a way to write an algorithm to handle the so-called stereotype  data in the setting provided in this note.

\end{document}